\newtheorem{Theorem}{Theorem}[section]
\newcommand{\be}{\begin{equation}}
\newcommand{\ee}{\end{equation}}
\newcommand{\ba}{\begin{eqnarray}}
\newcommand{\ea}{\end{eqnarray}}
\title{{\sf Non-degenerate metrics, hypersurface deformation algebra,}\\
 {\sf non-anomalous representations and density weights in quantum gravity}} 
\author{
{\sf T. Thiemann}$^1$\thanks{{\sf 
thomas.thiemann@gravity.fau.de}}\\
\\
{\sf $^1$ Inst. for Quantum Gravity, FAU Erlangen -- N\"urnberg,}\\
{\sf Staudtstr. 7, 91058 Erlangen, Germany}\\
}
\date{{\small\sf \today}}
\begin{document} 

\maketitle

{\sf

\begin{abstract}
Classical General Relativity is a dynamical theory of spacetime 
metrics of Lorentzian signature. In particular the classical metric 
field is nowhere degenerate in spacetime. In its initial value formulation with 
respect to a Cauchy surface the induced metric is of Euclidian signature 
and nowhere degenerate on it. It is only under this assumption of 
non-degenaracy of the induced metric that one can derive the hypersurace 
deformation algebra between the initial value constraints which is 
absolutely transparent from the fact that the {\it inverse} 
of the induced metric is needed to close the algebra. This statement 
is independent of the density weight that one may want to equip 
the spatial metric with.

Accordingly, the very definition of a non-anomalous representation of 
the hypersurface defomation algebra in quantum gravity 
has to address the issue of 
non-degenracy of the induced metric that is needed in the classical theory.
In the Hilbert space representation employed in Loop Quantum Gravity
(LQG) most emphasis has been layed to define an inverse metric operator 
on the dense domain of spin network states although they represent induced 
quantum geometries which are degenerate almost everywhere. It is no surprise 
that demonstration of closure of the constraint algebra on this domain meets 
difficulties because it is a sector of the quantum theory which is 
classically forbidden and which lies outside the domain of definition of the 
classical hypersurface deformation algebra. Various suggestions for addressing
the issue such as non-standard operator topologies, dual spaces (habitats) 
and density weights have been propposed to address this issue
with respect to the quantum dynamics of LQG.

In this article we summarise these developments and argue that insisting on 
a dense domain of non-degenerate states within the LQG representation 
may provide a natural 
resolution of the issue thereby possibly avoiding the above mentioned 
non-standard constructions.      
\end{abstract}

\section{Introduction}
\label{s1}

One possible approach to quantum gravity is via the Hamiltonian or canonical
formulation \cite{1}. This so called initial value 
formulation is widely used in mathematical general
relativity \cite{2} and numerical relativity \cite{3} with recent 
spectacular success in predicting e.g. black hole merger templates
\cite{4}. The canonical approach is also the fundament of Loop Quantum Gravity 
(LQG) \cite{5}. LQG derives its name from the fact that GR can be formulated 
in terms of Yang-Mills like non-Abelian connection variables \cite{6} 
and thus methods from lattice QCD \cite{7} (specifically Wilson {\it loops})
are employed in the quantisation.

A central ingredient of the Hamiltonian approach is the abstract hypersurface 
deformation algebra $\mathfrak{h}$. As shown 
in the seminal paper \cite{8}, every generally covariant Lagrangian (e.g. GR
with any type of matter) has a singular Legendre transform, leading 
to initial value constraints whose Poisson algebra is a representation of 
$\mathfrak{h}$. The algebra $\mathfrak{h}$ is isomorphic to the Lie algebra 
of spacetime diffeomorphisms when the equations of motion (e.g. Einstein
equations and Bianchi identities) hold. As an abstract algebra it can 
be defined as follows: Under the usual 
assumption of global hyperbolicity without which the initial value 
formualtion is ill-posed, the spacetime manifold $M$ is diffeomorphic
to $\mathbb{R}\times \sigma$. The freedom in choosing this diffeomorphism 
and thus setting up an initial value formulation is encoded by a scalar 
(``lapse'') function $n\in N$ and a (``shift'') vector field $u\in U$ 
on $\sigma$
respectively which depend parametrically on the time coordinate $t$ that 
defines a foliation of $M$ into leaves $\Sigma_t$ which are all diffeomorphic 
to $\sigma$. Then the fundamental $\mathfrak{h}$ algebra reads (we absorb 
any constant such as Newton's constant into $C,D$)
\be \label{1.1}
\{D(u),D(v)\}=-D([u,v]),\;\;
\{D(u),C(n)\}=-C(u[N]),\;\;
\{C(m),C(n)\}=-D(q^{-1}\;[M\; dN-N\; dM])
\ee
Here $[u,v]$ is the lie bracket of vector fields and $u[N]$ the action of 
$u$ considered as a derivation on the scalar functions. The first two 
relations in (\ref{1.1}0 
therefore just depend on the differentiable structure on $\sigma$. 
The last relation of (\ref{1.1}) however also depends on a metric tensor 
field $q$ on $\sigma$ whose {\it inverse} features into (\ref{1.1}). 
It is therefore a prerequisite for the {\it very definition} of 
$\mathfrak{h}$ that $q$ be {\it invertible everywhere} on $\sigma$ and 
{\it at all times}. In GR the physical meaning of $q$ is of course to be 
the pullback of the spacetime metric $g$ to $\sigma$ (at the respective time).
Because of this, (\ref{1.1}) is strictly speaking not a Lie algebra because 
its right hand side cannot be expressed as linear combinations of $C,D$ 
with (so called structure) constant coefficients, rather those coefficients 
are themselves functions of the dynamical fields. It is therefore 
customary to call them structure functions rather than structure constants.   
 
A classical representation  of (\ref{1.1}) as a Poisson bracket algebra is 
generated 
by a generally covariant Lagrangian such as GR with any matter coupling
including a possible cosmological constant. In this case $C,D$ acquire 
the meaning of Hamiltonian and spatial diffeomorphism constraint respectively.
Their Poisson brackets with the fields underlying the canonical formulation
are equivalent to the Lagrangian equations of motion (e.g. Einstein 
equations and Bianchi identities). In quantum gravity, one is interested in 
a quantum representation of (\ref{1.1}) by commutators of operators
defined on a common (i.e. independent of $N,u$), dense and invariant domain
$\cal D$ of a Hilbert space $\cal H$ which is supposed to implement the 
canonical (anti-) commutation relations among the the fields (such as 
$q$ and its conjugate moemntum $p$) such that
(we absorb $\hbar$ into $C,D$)
\be \label{1.2}
[D(u),D(v)]=-i\;D([u,v]),\;\;
[D(u),C(n)]=-i\;C(u[N]),\;\;
[C(m),C(n)]=-i\;D(q^{-1}\;[M\; dN-N\; dM])
\ee
where $D(x),\; C(x);\; x\in \sigma$ 
together with $q(x)$ and $q(x)^{-1}$ have become operator valued 
distributions, $C(n)=\int_\sigma\; dx n(x)\; C(x)$ and similar for
$D(u)$. Therefore 
(\ref{1.2}) is not completely defined in terms of $C,D$ alone but also 
requires in addition information about what should be done with the 
operator valued distribution $q^{-1}$, e.g. ordering issues have to be 
addressed. In particular, the appearance of the {\it inverse} of $q$
requires that both $q,q^{-1}$ are well defined on ${\cal D}$. Moreover,
the {\it quantum Einstein} or Wheeler-DeWitt equations \cite{9} are 
the conditions on
distributions (linear functionals) $l$ on $\cal D$ such that 
\be \label{1.3}
l[C(n)\psi]=l[D(u)\psi]=0
\ee
for all $n\in N,\; u\in U\;\psi\in D$. Then the validity of (\ref{1.2}),
without anomalous terms not in the linear span of $C,D$,  
are ``integrability conditions'' in order that (\ref{1.3}) holds.  
The solutions of (\ref{1.3}) are in general not elements of $\cal H$ unless 
zero is in the joint pure point spectrum of all $C(n),\; D(n)$. To find 
an inner product on these ``generalised zero eigenvectors'' is then an 
additional task that one has to carry out. If (\ref{1.2}) was a true 
Lie algebra one could use the theory of rigged Hilbert spaces 
\cite{10}. Since it is not, one has to resort to different methods, for 
example \cite{11} which replaces the set of all $C(n),D(u)$ by a single 
(``master'') constraint $M$ to which the theory of rigged Hilbert spaces 
may then be applied.\\
\\
This rough sketch of the general canonical quantisation programme now must
be implemented concretely. This is what has been done in LQG. Inspired by 
the fact that fermionic matter requires to work with Vielbeins and connections 
rather than metrics, a connection formulation \cite{6} has been introduced.
A rigorous Hilbert space representation of the CCR for geometry \cite{11} and 
the matter CCR and CAR
\cite{12} has been defined which is designed to formulate the quantum 
dynamics non-perturbatively given the perturbative non-renormalisablility of 
quantum gravity and thus is not one of the standard Fock representations.
The difference between the LQG representation and the standard Fock 
representation can be seen for instance in the fact that the former 
is non not separable, that only finite but no infinitesimal unitary quantum 
diffeomorphisms $U(\varphi)$ can be defined \cite{12a}  (technically, 
1 parameter groups $s\mapsto \varphi^u_s$ of diffeomorphisms 
are not strongly continuous), that
area and volume operators \cite{13} can be defined at 
all and that their spectra are pure point and that connections themselves 
are ill defined while their holonomies are bounded operators. 
That these geometrical operators 
and their generalisations such as triads \cite{14} can be defined in the 
LQG representation turns out to be very important in order to define 
the quantum dynamics, i.e. the operators $C(n)$ \cite{15}, because 
literally all geometry and matter matter contributions including the 
cosmological constant depend on them. 

The definition \cite{15} of the operators $c(n)$ on the dense domain $\cal D$ 
given by the span of an ONB of $\cal H$ known as spin network functions
(SNWF) \cite{16} involves two steps: First a regularisation introducing 
a UV regulator $\epsilon$ necessary in order that the connections involved
in $c(n)$ are replaced by holonomises which yields operators 
$c_\epsilon(n)$. Secondly, the limit $\epsilon\to 0$ 
is taken in an operator topology which makes use of the existence of 
a space diffeomorphism invariant distributions $L$ on $\cal D$
proved in \cite{12a} resulting 
in regulator free operators $c(n)$. More in detail
\be \label{1.3a}
c_\epsilon(n)\to c(n)\;\;
\Leftrightarrow\;\;
l[c_\epsilon(n)\psi]\to l[c(n)\psi]\;\;\forall\; l\in L,\;\psi\in {\cal D}
\ee
It should be stressed that these 
operators are still defined on $\cal D$, that their commutator does not 
vanish but that their commutator annihilates $L$ and that 
its commutator with finite diffeomorphisms $U(\varphi)$ 
yields another operator $c(n_\varphi)$ up to a diffeomorphism. Stricly 
speaking therefore (\ref{1.2}) is {\it not} imeplemented, it cannot, as 
the operator $D(u)$ does not exist. However, (\ref{1.2}) is replaced by
\ba \label{1.4}
U(\varphi)\;U(\varphi') &=&
U(\varphi\circ \varphi')
\nonumber\\
U(\varphi)\;C(n)\;U(\varphi)^{-1} &=&
=U(\phi_{\varphi,n})\;C(\varphi^\ast n)
\nonumber\\
{[}C(m),C(n)] &=& -i\;
\sum_\varphi\; [U(\varphi)-1_{{\cal H}}],\;F'(m,n,\varphi)\; 
\ea
Here $\phi_{\varphi,n}$ is a diffeomorphism that depends on both $\varphi,n$
and $F'(m,n,\varphi)=-F'(n,m,\varphi)$ are ``structure operators''. The 
sum is formally over all diffeomorphisms but the structure operators vanish 
except for finitely many when acting on $\cal D$.
The algebra (\ref{1.4}) is therefore 
consistent in the sense that the joint 
kernel of the set of commutators of its generators 
is contained in the joint kernel of the set of its generators which is 
a minimal requirement, we call it  {\it mathematical anomaly freeness} in what
follows.  

At first this appears to be as close as one may hope to get given that 
$D(u)$ is not at one's disposal and that (\ref{1.4}) is a suitable substitute
for (\ref{1.2}) under these circumstances. In fact, the optimum that one 
could hope for would be  
\be \label{1.5}
U(\varphi)\;U(\varphi')=
U(\varphi\circ \varphi'),\;\;
U(\varphi)\;C(n)\;U(\varphi)^{-1}
=C(\varphi^\ast n),\;\;
[C(m),C(n)]=-i\;
\sum_\varphi\; F(m,n,\varphi)\; [U(\varphi)-1_{{\cal H}}],\;\
\ee
where the ``structure operators'' $F(m,n,\varphi)$ 
qualify as quantisations of the structure functions $q^{-1}(m\; dn-n\; dm)$.
The actual situation (\ref{1.4}) agrees with the wish list (\ref{1.5}) as far 
as the first relation is concerned, they agree with respect to the second 
up to a diffeomorphism but the last relation in (\ref{1.4}) is manifestly 
violated: The $F'$ do not qualify as a possible realisation of the $F$. 
To see this one quantises the operator corresponding to $D(q^{-1}(m dn-n dm)$
independently \cite{17} and notices that the structure operators $F,F'$ 
differ. To bring them to match, the second action by $c(m)$ would need
to be non-trivial at the new excitations of a SNWF that a first action by 
$c(n)$ has created. However, due to the properties of the volume operator 
used in the quantisation of $c(n)$ in \cite{15} that second action is trivial.
Changing that property in the current set-up would create even a 
{\it mathematical anomaly}.\\ 
\\
One can summarise the situation therefore by saying that \cite{15} is free 
of mathematical anomalies but that it displays a {\it physical anomaly} 
in the sense just described, namely that the algebra of generators 
closes under taking commutators, however with the wrong structure functions.
This is of course inacceptable: Using a crude ananolg from the theory 
of finite dimensional 
Lie algebras, suppose that one were to quantise the Lie algebra $A$ 
but ended up with a representation of the Lie algebra $B$ of the same 
dimension. Then one 
would wrongly 
define physical states to be those that are invariant under the 
symmetries (gauge transformations) generated by $B$ rather than those of 
$A$ (consider the examples of $A=$so(1,3) and $B=$so(4)).

To improve the situation one must obviously modify $c(n)$ within the 
LQG representation 
or, more drastically, change the LQG representation and 
start from the beginning. In this paper we focus on efforts 
that do not change the LQG representation but mention that currently
Hamiltonian renormalisation methods are being developed \cite{18}         
which potentially will change the LQG representation. These
non representation changing techniques
are based on so-called ``habitats'' \cite{19}. A habitat is a space 
of distributions $L'$ on $\cal D$ which is different from the space 
$L$ of diffeomorphism invariant distributions. One tries to define a dual 
representation $C'(n)$ of $C(n)$ on $L'$ by taking 
\be \label{1.6}
[C'(n) l'](\psi):=\lim_{\epsilon \to 0}\; l'[C_\epsilon(n)\psi]     
\ee
for all $\psi\in {\cal D}$. Provided the limit exists (in the sense
of complex numbers) it defines a new distribution on $\cal D$ and if the 
space $L'$ is carefully chosen, then $L'$ is invariant under the $C'(n)$. 
In that case one can take commutators and compute the algebra of dual 
operators. 

Note the difference between (\ref{1.3a}) and (\ref{1.6}):
In (\ref{1.3a}) the space $L$ is used to define a topology on the 
unbounded operators on $\cal H$ with common, dense, invariant domain $\cal D$
of the Hilbert space $\cal H$ while in (\ref{1.6}) one defines    
a new space of operators on a different space $L'$ which however is no 
Hilbert space, it does not come with an inner product. Therefore, the 
algebra of the $c'(n)$ is purely algebraic and is not equipped with any 
obvious topology while the $c(n)$ come equipped with any of the topologies 
that descend from $\cal H$ and $\cal D$. Note also that by definition 
$c(n)$ does not commute with diffeomorphisms and thus $L$ itself 
does not qualify as an invariant space $L'$ while it maybe a subspace thereof.

In \cite{19} such an invariant space $L'$ was found and it was shown that 
the algebra of the $C'(n)$ is Abelian on this $L'$.
Unfortunately, the difference between the $C(n)$ and the $C'(n)$ has 
been confused repeatedly in the literature \cite{20} and has wrongly 
lead to the statement that the $C(n)$ only close ``on shell'' (i.e. on
the space $L$ annihilated by the operators 
$U(\varphi)-1_{{\cal H}}$) and commute. Not only is this mathematically 
impossible 
because the $C(n)$ or rather their duals $C'(n)$
do not preserve $L$ but it is also technically wrong as the 
$C(n)$ are defined densely on $\cal D$, hence they are defined 
manifestly ``off shell'' and there their 
commutator does not vanish. See the extensive discussion in the second book 
of \cite{5} and the recent review \cite{18}.

To work with non-trivial habitats appears to be attractive because in the 
topology defined by (\ref{1.3a}) the resulting limit operators $C(n)$ carry 
a large amount of quantisation ambiguities. While the joint kernel 
of the $C(n),\; U(\varphi)-1_{{\cal H}}$ that one is eventually interested in 
is insensitive to most of these 
ambiguities it maybe hoped for that the $C'(n)$ have less ambiguities. 
Therefore an ambiguous programme has been recently launched \cite{21,21a}   
which also aims to find a non-trivial representation of the algebra 
generated by the $C'(n)$ such that $[C'(m),C'(n)]$ has no physical anomaly.
A common feature of these developments is that $C'(n)$ 
is replaced by 
$\tilde{C}'(n)$t
corresponding to quantisations (on certain dual spaces $L'$) of classical 
$\tilde{C}(n)$ wherea $\tilde{C}$ has {\it non-standard density weight}. 
This non-standard density weight is absolutely crucial in order to avoid 
the Abelian charactler of the algebra of the $C'(n)$. 

It was shown in 
\cite{21} that non-trivial choices of $L'$ exist which make the 
algebra of the $\tilde{C}'(n)$ for Euclidian signature vacuum GR without 
cosmological constant free of physical anomalies. This is a most astonishing 
and non-trivial result. It triggers the following questions:
\begin{itemize}
\item[1.] The detailed action of $\tilde{C}(n)$ at finite regulator 
and the choice of $L'$ have to be matched carefully to each other in 
order that the regulator can be removed for the dual operator
$\tilde{C}'(N)$ on $L'$ and its action be non-anomalous.
This raises the question how many suitable 
such choices do exist and what the residual amount of ambiguity is.
\item[2.] While for density weight unity the norm of the states in the 
image of the operator $C(N)$ converge as the regulator is removed, 
for the non standard density operator $\tilde{C}(N)$ that limit 
is divergent. This is the price to pay in order that the 
commutator of the $\tilde{C}'(N)$ be non-Abelian on 
the chosen $L'$. While one may argue 
that the $\tilde{C}'(N)$ are ``more fundamental'' than the 
$\tilde{C}(N)$ one may wonder whether one cannot have natural 
density weight unity without Abelian $C'(N)$ on suitable different $L'$. 
\item[3.] The non-standard density weight does not 
allow to switch to physical Lorentzian signature, or admit a 
non-vanishing cosmological constant or non-trivial matter coupling
as also stated in \cite{15,21}. We will review the reasons for 
this in section \ref{s2}. Thus one presumably needs to combine 
\cite{15,21} in some non-trivial way and resort to density weight one. 
\end{itemize}
Note that the whole strategy of solving the constraints in quantum 
theory rather than classically (gauge fixing) maybe criticised as being
beyond practicability for complicated theories such as GR: After all,
the quantum constraints have to be regularised and densely defined, the 
regulator removed, solutions to be found, the solutions to be equipped 
with a new inner product, observables to be defined on that new physical 
Hilbert space. All of that can be avoided using gauge fixing the classical 
theory and one works directly just with observables and the physical Hilbert 
space. That gauge fixings are usually plagued by global issues (Gribov 
copies) appears to be higher order problem given the immense dificulties 
in solving the constraints in quantum theory. Yet, the concerns about 
anomaly freeness of the constraints can not entirely be ignored in such a 
reduced phase space approach. This is because pieces of 
the constraints are building blocks of the physical Hamiltonian that drives 
the physical time evolution of observables and in that sense their quantisation 
ambiguities reappear, it is just that the anomaly issue is absent. 
On the other hand, typically gauge 
fixing conditions are phrased as coordinate conditions on scalar configuration
degrees of freedom $q$ and one then solves the constraints for the respective 
conjugate momenta $p=-h$ which are scalar densities of weight one. 
Thus we see that even in the gauge fixed theory density weight unity is 
the natural choice. See \cite{21b} for natural implementations of 
gauge fixing and the naturality of the density weight one choice.\\
\\
In this article we wish to complement the debate about an anomaly free 
implementation of $\mathfrak{h}$ by communicating the following observations:
\begin{itemize}
\item[1.] That, with minimal physical assumptions, 
the standard density weight of 
\cite{15} is the only viable one.
\item[2.] That the algebra 
of the $C'(n)$ of \cite{19} is Abelian on the chosen $L'$ 
is {\it physically correct}.
\item[3.] That the reason for this Abelian character is the neglectance 
of the implicit assumption about $\mathfrak{h}$, namely that $q$ be 
invertible.  
\item[4.] That a non-Abelian algebra of the $C'(n)$ on suitable $L'$ 
and maybe even of the $C(n)$ on suitable $\cal D$ maybe possible with 
standard density weight on states on which $q$ is invertible.
\end{itemize}
The presumptions expressed in item [4.] rest on section \ref{s6} of the 
present paper and on our companion paper \cite{50} where it is shown that
when quantum non-degeneracy is taken into account the apparent 
tension between density unity and non-trivial quantum $\mathfrak{h}$ 
disappears at least in those theories. \\  
\\
The architecture of this article is as follows:\\
\\
In section \ref{s2} we show, independent of the arguments of 
\cite{11}, why an LQG like representation in quantum gravity is {\it 
dynamically preferred.} We also repeat from \cite{15}
why density weight one of $C$ is the only viable choice under 
physically well motivated assumptions and using again 
{\it dynamical input}.

In section \ref{s3} we show why current calculations of the algebra
of the $C(n)$ or $C'(n)$ are inconclusive and the result of the 
computations of \cite{19} are not at all surprising: Current calculations 
are performed in sectors of $\cal H$ or spaces of distributions $L'$ 
which represent quantum geometries which do not qualify as quantisations 
of classical geometries in which the classical algebra $\mathfrak{h}$
is defined. This is because current calculations are performed in regimes 
with quantum geometries that are degenerate almost everywhere.
In other words, we promote the point of view {\it that 
quantum geometrical non-degeneracy be a central ingredient in the 
very definition of $\mathfrak{h}$.} In doing so, the Abelian character 
of the dual action on certain spaces of distributions \cite{19} may
disappear by itself, without changing the density weight, thereby
avoiding the issues mentioned above. Also, many of the results of \cite{15}
can probably be transferred to the non-degenarate sector.

In section \ref{s5} we show how the non-degeneracy condition can potentially be taken 
care of in the Hamiltonian renormalisation of LQG \cite{18}. 
This programme is still its infancy. In a preliminary calculation we 
consider a certain set of
coherent non-degenrate states based on \cite{22} and compute 
expectation values of the commutators $[C(m),C(n)]$ with $C(m)$ chosen 
as in \cite{32}. Exploiting the freedom in the choice of these states 
one can get the expectation value of the difference between 
the commutator and the quantisation of the Poisson bracket 
as small as one wishes. 

In section \ref{s6} we consider the toy model of parametrised field theory
\cite{24}. It has been quantised by LQG methods before and it was 
demonstrated to have a non-anomalous algebra with non-standard 
density weight on some $L'$
\cite{25} as above and standard density weight 
on $\cal H$ \cite{26} if an addition a renormalisation step is invoked.
Here we employ a new LQG like representation which is better geared towards 
quantum non-degeneracy and we show that $C'(n),D'(u)$ with {\it standard 
density weight} can be represented on a certain space $L'$ without anomalies 
(but including the central charge of the Virasoro algebra). 

In section \ref{s7} we summarise and conclude.\\
\\
In closing, we stress that this is mostly conceptual work. We delibaratively
neglect many technical details in order not to draw attention away
from the main line of argument. However, all the technical details can 
be found in the original articles quoted along with.

\section{CCR and CAR representations of LQG type}
\label{s2}

In any QFT is quite important that one finds a ground state of the 
corresponding Hamiltonian in order that the dynamics defined by 
it can be constructed. For instance, choosing a Fock representation 
not precisely geared to the Hamiltonian $H$ of a free Klein Gordon field of 
mass $M$ makes the quantum dynamics ill defined. If one considers the 
Hamiltonian constraints $C(n)$ of GR (with or without matter) 
and looks for a representation of the CCR and CAR on a Hilbert space 
such that $C(n)$ 
be densely defined on a suitable common invariant domain $\cal D$ thereof
it is well motivated to try to find a representation based on a cyclic 
vector $\Omega$ whose excitations create $\cal D$. A pecularity of $C(n)$ 
is that every single piece of it depends non-trivially on the induced 
metric $q$ and/or its inverse (bosons) or the co-triad $e$
(square root of $q$) or its inverse. Accordingly, one has good chances 
to have a ground state of the $C(n)$ at one's disposal if one manages 
to build a representation based on a cyclic state $\Omega$ annhilated by 
$e$ or suitable aggregates formed from it. In LQG one considers the variables 
\be \label{2.1}
E^a_j:=\sqrt{\det(q)}\; e^a_j,\; q_{ab}=\delta_{jk} \;e_a^j\; e_b^k,\;\;
e^a_j\; e_a^k=\delta_j^k
\ee
where $a,b,c,..=1,2,3$ are tensor indices w.r.t. $\sigma$ while 
$a,b,c,..=j,k,l$ are tensor indices w.r.t. su(2). Canonically conjugate to
$E^a_j$ is a su(2) connection $A_a^j$ \cite{6}
which captures information about 
the extrinsic curvature of the Cauchy surfaces
\be \label{2.2}
\{E^a_j(x),\;A_b^k(y)\}=\delta^a_b\;\delta_j^k\;\delta(x,y)
\ee
where we have set the Newton constant equal to unity. 

To find a representation of the CCR corresponding to (\ref{2.2}) we 
consider the Weyl algebra defined by the Weyl elements 
\be \label{2.3}
W(f,F):=\exp(i\;[<E,f>+<F,A>]),\;\;
<E,f>:=\int_\sigma\; d^3x\;E^a_j(x)\; f_a^j(x),\;\;
<F,A>:=\int_\sigma\; d^3x\;F^a_j(x)\; A_a^j(x),\;\;
\ee
where we leave the nature of the test functions $f,F$ unspecified for the 
moment. Then we define $\cal D$ as the linear span of the $W(f,F)\Omega$
where $\Omega$ is annihilated by $E$ as motivated above, that is
\be \label{2.4}
W(f,0)\Omega=\Omega
\ee
It then follows immediately from the Weyl relations 
\be \label{2.5}
<\Omega,\; W(0,F)\;\Omega>_{{\cal H}}=
<\Omega,\;W(f,0)\; W(0,F)\;W(f,0)^{-1}\;\Omega>_{{\cal H}}
=\exp(-i<F,f>)\; <\Omega,\; W(0,F)\;\Omega>_{{\cal H}}
\ee
for any $F,f$ therefore automatically
\be \label{2.6}
<\Omega,\; W(0,F)\;\Omega>_{{\cal H}}=\delta_{F,0}
\ee
where $\delta_{F,0}$ is the Kronecker $\delta$. Hence the algebraic structure 
of $c(n)$ leads in a few lines to a representation of the Narnhofer-Thirring
type \cite{27}. The Hilbert space is the completion of the span $\cal D$
of the $W(0,F)\Omega$. The Weyl elements $W(f,0)$ act continuously, in fact
diagonally, on those 
by multiplication by $\exp(-i\;<F,f>)$ while the $W(0,G)$ act discontinuosly
by shifting $F$ to $F+G$. In fact, by Stone's theorem, also $<E,f>$ is 
well defined and acts by multiplication by $-<F,f>$.
The Hilbert space is not separable whenever the 
set of admitted $F$ is not countable.  

The representation theory of (\ref{2.2}) is really as simple as that
as soon as we agree that $\Omega$ be annihilated by $E$, with no conditions 
on $f,F$ except that $<F,f>$ should be a well defined number. This still 
allows $f,F$ to be distributions on $\sigma$ subject to the condition that 
their singularity structure is weighted in such a way that $<F,f>$ be well 
defined. We now show that the algebraic form of the Hamiltonian 
constraints {\it uniquely dictates 1. the smearing dimensions of 
$A,E$, 2. the density weight of the Hamiltonian constraint and 3. in any 
dimension.} For the formulation of quantum gravity in connection variables 
for all dimensions see \cite{28}.
\begin{Theorem} \label{th2.1} ~\\
Consider quantum gravity in $D+1$ spacetime dimensions in a 
representation of Narnhofer-Thirring type as above. Suppose that there is 
at least a cosmological constant in addition to the vacuum contribution.
Then the smearing dimension of $A,E$ are $1,D-1$ resepectively and 
the density weight of the Hamiltonian constraint must be unity.
\end{Theorem}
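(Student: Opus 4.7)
The plan is to combine the Narnhofer--Thirring structure of the representation with one minimal piece of \emph{dynamical} input: the quantum cosmological-constant contribution to $C(N)$ must survive as a nontrivial densely defined operator on ${\cal D}$ after removal of the regulator. I would first settle the smearing dimensions. In (\ref{2.3}) the pairings $\langle E,f\rangle$ and $\langle F,A\rangle$ must be well defined real numbers, and in the Narnhofer--Thirring representation $A$ is only accessible through its Weyl exponentials $W(0,F)$, never as a pointwise field. Since $A_a^j$ is a genuine $1$-form of density weight $0$, the only intrinsically geometric way to smear it is as a path integral $\langle F,A\rangle = \int_\gamma A$, so $F$ must be distributional with support on a $1$-dimensional submanifold (the holonomy smearing). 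Canonical conjugacy (\ref{2.2}) then forces $f$ to be distributional with support on a $(D-1)$-dimensional submanifold, making $\langle E,f\rangle$ a flux; this fixes the smearing dimensions of $(A,E)$ to $(1,D-1)$ in any dimension.

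Next I would pin down the density weight $w$ of the Hamiltonian constraint density $\mathcal{C}$ by an $\epsilon$-power-counting argument applied to the cosmological-constant term, which is the cleanest probe since it is $A$-independent. Classically, with lapse $N$ of compensating density weight $1-w$,
\be
C_\Lambda(N)=\Lambda\int_\sigma d^Dx\;N\,(\det q)^{w/2}.
\ee
Partitioning $\sigma$ into cells $\Delta$ of coordinate size $\epsilon^D$ with a sample point $x_\Delta$ per cell, and using $\sqrt{\det q}(x_\Delta)\,\epsilon^D\approx V_\Delta:=\int_\Delta\sqrt{\det q}\,d^Dx$, one obtains
\be
C_{\Lambda,\epsilon}(N)=\Lambda\sum_\Delta N(x_\Delta)\,\epsilon^{D(1-w)}\,V_\Delta^{\,w}.
\ee
In the LQG representation $V_\Delta$ is built out of $(D-1)$-fluxes and therefore has a pure point spectrum of $\epsilon$-\emph{independent} eigenvalues on spin network states, because fluxes only see the combinatorics of the intersection of transverse surfaces with the graph. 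Hence $\lim_{\epsilon\to 0}C_{\Lambda,\epsilon}(N)$ can exist as a nontrivial operator on ${\cal D}$ only when $\epsilon^{D(1-w)}=\epsilon^0$, i.e.\ $w=1$: for $w<1$ the limit vanishes identically and for $w>1$ it diverges. The same counting applied to the kinetic Einstein--Hilbert piece is consistent with this conclusion because the Thiemann identity $E^a_j(\det q)^{-1/2}\propto\{A_a^j,V\}$ only has a direct quantum analogue (commutator with $V$) when the constraint has density weight one.

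The main obstacle is the nonclassical step $\sqrt{\det q}(x_\Delta)\,\epsilon^D\approx V_\Delta$ together with the claim that $V_\Delta^{\,w}$ has an $O(1)$ eigenvalue independent of $\epsilon$ on spin network excitations. This rests on the fact that fluxes are genuinely $\epsilon$-independent operators rather than classical Riemann sums, so that the spectrum of $V_\Delta$ does not scale with cell size; one must further verify that fractional powers $V^w$ do not admit any alternative regularisation compatible with the Narnhofer--Thirring representation that could rescue a different density weight. Once these dynamical inputs are in place the power counting is rigorous and isolates $w=1$ uniquely, independently of $D$, of the presence of matter, and of further regularisation ambiguities, as long as the cosmological-constant sector is required to be quantisable on ${\cal D}$.
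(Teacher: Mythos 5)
Your density-weight power counting, \emph{given} the smearing dimensions $(1,D-1)$, is essentially sound and in fact mirrors a remark the paper makes after its proof (for $w>1$ the cosmological term diverges, for $w<1$ it becomes trivial, rigorously so because the fluxes themselves, not only their exponentials, exist). But as a proof of the theorem it has a genuine gap: the smearing dimensions are part of the \emph{conclusion}, and you put them in by hand. Your argument that a $1$-form can ``only'' be smeared along curves is a naturalness appeal, not a consequence of the hypotheses; the Narnhofer--Thirring construction of Section 2 deliberately leaves the singularity structure of $f,F$ open, requiring only that $\langle F,f\rangle$ be finite. Your conjugacy argument for $E$ also fails: if $F$ is supported on curves, a \emph{smooth} $f$ (smearing dimension $D$) pairs perfectly well with it, so canonical conjugacy does not force $(D-1)$-dimensional fluxes. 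The paper instead derives both $k$ (the smearing dimension of $A$) and $w$ simultaneously from dynamical input, by counting $\delta$-distributions in $\det(F)$: if $A$ is smeared in $k$ dimensions, $F$ carries $D-k$ deltas per coordinate direction, and finiteness plus nontriviality of the cell-regularised cosmological and kinetic terms require respectively $(D-k)\frac{w}{D-1}=1$ and $(D-k)\frac{2-w}{D-1}=1$, whence $w=2-w$, i.e.\ $w=1$, and then $k=1$.

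This also exposes why your strategy of using only the cosmological term and relegating the vacuum (kinetic) term to a vague ``consistency check'' via the Thiemann identity cannot work: the cosmological term alone yields the single relation $(D-k)\frac{w}{D-1}=1$, which has a one-parameter family of solutions — e.g.\ in $D=3$ it is satisfied by $(k,w)=(1,1)$ but equally by $(k,w)=(2,2)$, so a surface-smeared connection with density weight two would pass your test once you drop the unproven $k=1$ assumption. The hypothesis ``at least a cosmological constant \emph{in addition to the vacuum contribution}'' is exactly what supplies the second, independent condition with exponent $(2-w)/(D-1)$; only the pair of conditions pins down $w=1$ and $k=1$ (and thence, via nondegenerate dual pairing, the $(D-1)$-dimensional smearing of $E$). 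To repair your proposal you would have to carry out the power counting for the kinetic term at arbitrary $k$ and $w$, as the paper does, rather than treating it as an afterthought.
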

\begin{proof}:\\
In D spatial dimensions we have with $E^a_j=\sqrt{\det(q)}\; e^a_j$ that 
$|\det(E)|=\det(q)^{(D-1)/2}$. Therefore the cosmological term of the 
density weight $w$ Hamiltonian constraint is given by 
\be \label{2.7}
\Lambda\; \int_\sigma\; d^Dx\; n(x)\;
|\det(E)|^{w/(D-1)}(x)
\ee
while the kinetic term of the vacuum contribution contains the term
\be \label{2.7a}
\int_\sigma\; d^Dx\; n(x)\; {\rm Tr}([A\cdot E]\;[A\cdot E])\; 
|\det(E)|^{(w-2)/(D-1)}
\ee
where $A\cdot E$ denotes some contraction of tensorial and Lie algebra 
indices which depend on $D$ and which is not important for the proof.

Now $E$ is diagonal on the $W[(0,F)]$ with eigenvalue $F$. Thus in 
(\ref{2.7a}) the action of the operator produces in particular 
the term ${\rm Tr}([A\cdot F]\;[A\cdot F])$ which we have to quantise 
e.g. in terms of $\sin(<F_\Box,A>)$ in a Riemann sum regularisation 
of the integral by summing over cells $\Box$ of coordinate volume $\epsilon$ 
and where $F_\Box$ is the restriction of $F$ to such a cell. This 
means that for such a cell the contribution of (\ref{2.7a}) reads schematically
\be \label{2.7b}
n(p)\; {\rm Tr}([\sin(<F_\Box,A>)]\;[\sin(<F_\Box,A>)])\; 
\epsilon^{-D}\; |\det(F)|^{(w-2)/(D-1)}
\ee
where $p$ is the center of the cell. 

Thus if $A$ 
is smeared in $k=0,1,.., D$ dimensional submanifolds, 
then $F$ contains $D-k$ $\delta$ distributions and correspondingly 
$\det(F)$ will contain $D(D-k)$ $\delta$ distributions, isotropically 
wrt  
direction dependence, that is, it will contain $D-k$ $\delta$ distributions 
in each coordinate direction. From (\ref{2.7b}) we see that $k=D$ is 
not possible because then the limit $\epsilon\to 0$ would be singular,
rather the $D$ factors of $\epsilon$ in the denominator must turn into 
something finite upon replacing the denominator by 
an integral 
\be \label{2.7c}
\int_\Box\; d^Dx\;
|\det(F)|^{(2-w)/(D-1)}(x)  
\ee
and the integrand must have the singularity structure 
of the $\delta$ distribution in $D$ dimensions in order to remain finite as 
$\epsilon\to 0$.

Likewise the action of (\ref{2.7}) also just replaces $E$ by $F$. In order 
that this term be also finite we obviously must have equal powers of the 
determinant
\be \label{2.7d}
\frac{w}{D-1}=\frac{2-w}{D-1}\;\; \Rightarrow\;\; w=1
\ee
and in order to produce the $\delta$ distribution in $D$ dimensions 
we must have 
\be \label{2.7e}
(D-k)\frac{w}{D-1}=1 \;\; \Rightarrow \;\; k=1
\ee
\end{proof}
~\\   
It follows that for $D=3$ the functions 
$F,f$ smear $A,E$ effectively in 1 and 2 dimensions 
respectively, i.e. they are concentrated on 1 and 2 dimensional submanifolds 
respectively (that is curves $c$ and surfaces $S$ respectively). If one wants 
in addition that $W(0,F), W(f,0)$ transform covariantly under SU(2) gauge 
transformations then one considers instead of $W(F,0)$ holonomies 
$H(c)$ of $A$ along curves $c$ and instead of $W(0,f)$ fluxes 
$\Phi_f(S)$ of ${\rm Tr}(Ef)$ through surfaces $S$ where $f$ is an su(2) 
valued function on $S$. In this way one arrives naturally at the 
holonomy flux algebra and its LQG representation 
\cite{11} using mostly dynamical input.

The discussion reveals that choosing density weight $w\not=1$ while
keeping smearing dimensions $1,2$ for $F,f$ 
as is done in \cite{21} makes both terms 
(\ref{2.7}) and (\ref{2.7a}) formally diverge for $w>1$ and trivial for 
$w<1$ in the limit $\epsilon\to 0$. 
For the cosmological term that can be rigorously shown e.g. in the
weak operator topology of    
the LQG representation because the operators $\Phi_f(S)$ 
and not only their exponentials exist. For (\ref{2.7a}) this argument cannot
be made because while one can replace (\ref{2.7a}) by a Riemann 
sum approximant of the structural form 
\be \label{2.9}
\epsilon^{-3(w-1)}\; \sum_v \;n(v)\;
{\rm Tr}([H_\epsilon(v)-H_\epsilon(v)^{-1}]^2\; \Phi_\epsilon(v)^2)\; 
|\det(\Phi_\epsilon(v))|^{w/2-1}
\ee
where $H_\epsilon(v),\;\Phi_\epsilon(v)$ denote holonomies and fluxes 
respectively localised in cubes of coordinate volume $\epsilon^3$ and centre 
$v\in \sigma$, the limit $\epsilon \to 0$ does not exist say in the weak
operator topology because the holonomies are not weakly continuous in 
the LQG representation. In \cite{21} one picks the non-standard 
density weight $w=4/3$ which yields a prefactor $\epsilon^{-1}$ in front
of the sum in (\ref{2.9}). We will see the motivation for doing 
this in the subsequent sections.  

For the time being, we note that for $w=1$ a term like (\ref{2.9}) does 
converge in the following non-standard operator topology: Let $l\in L$ 
a diffeomorphism invariant distribution on $\cal D$ the linear 
span of the $W(0,F)\Omega$ or the Pol$(\{H\})\Omega$ where Pol
denotes polynomials of holonomies. Let $\psi\in {\cal D}$. Then 
the operator $O_\epsilon(n)$ corresponding to (\ref{2.9}) can be 
defined and evaluated on $\psi\in {\cal D}$ \cite{15}. Then $O_\epsilon(n)$ 
converges to the operator $O(n)$ densely defined on $\cal D$ if 
\be \label{2.11}
\lim_{\epsilon\to 0}\; l[(O_{\epsilon}(n)-O(n))\; \psi]=0 \;\; 
\forall\; \psi\in {\cal D},
\;\; l\in L
\ee
One finds that the limit is trivial and one may pick any fixed  
$\epsilon=\epsilon_0$ and set $O(n)=O_{\epsilon_0}(n)$. The reason why 
this works is because $w=1$: This makes the whole construction 
diffeomorphism covariant and changing $\epsilon$ can be absorbed into 
a diffeomophism to which $l$ is insensitive. This does not work 
for any other density weight.   

We close this section by mentioning that a similar argument as above 
applies for all the matter content of the standard model 
and uniquely fixes the smearing dimensions whenever one of the 
members of the canonical pair annihilates the vacuum $\Omega$, see 
\cite{12,15} for details.  

\section{Non-degenerate states and density weight}
\label{s3}

To understand the apparent tension between the natural density one 
weight of $C(x)$ and proper representation of of the hypersurface 
deformation algebra $\mathfrak{h}$ in the LQG representation and its 
relation to the non-degenracy condition, it is necessary to go into 
more details. The rest of the paper considers the case $D=3$ only. \\
\\
As outlined in the previous section, it is well motivated to work 
in a representation in which the vacuum is annihilated by the 
2d smeared operator corresponding to $E^a_j$. The corresponding
LQG Hilbert space is equipped with an ONB known as SNWF. These are 
certain polynomial functions of an arbitary set of holonomies along 
1d oriented curves called edges $e$ that intersect nowhere except in their 
endpoints   
called vertices $v$. The edges are labelled by half integral 
spin quantum numbers $j_e$ while the vertices are labelled by intertwiners 
$\iota_v$ between the corresponding irreducible representations meeting 
at $v$. Hence a SNWF $T_s$ is labelled by a spin network $s=(\gamma,j,\iota)$
where $\gamma$ is a finite, oriented graph and $j,\iota$ are the collection of 
$j_e,\iota_v$.
  
The classical function $C$ on the phase space coordinatised by $A,E$ 
is polynomial in $A$ but not 
a polynomial in $E$. Rather it also depends on integer inverse powers of 
$|\det(E)|^{1/2}$. It is a non-trivial result in LQG that the integral 
of $|\det(E)|^{1/2}$ over 3d submanifolds $R$ is a well defined, in fact 
essentially s.a. operator 
known as the volume operator $V(R)$ whose dense domain is 
given by the span $\cal D$ of the SNWF. We use here as in \cite{15} the 
version of $V(R)$ due to Ashtekar and Lewandowski because only with 
this version the operator $C(n)$ defined below annihilates vertices with 
co-planar tangents of adjacent edges and only this operator passes the 
triad test \cite{14}, i.e. it implements the classical non-polynomial identity
\be \label{3.1}
E^a_j(x)=\frac{1}{2}\epsilon^{abc}\epsilon_{jkl}\;
{\rm sgn}(\det(\{V(R),A(x)\}) 
\{V(R),A_a^k(x)\} 
\{V(R),A_a^k(x)\} 
\ee
for any $x\in R$. If co-planar vertices are not annihilated then 
$C(n)$ is not densely defined on $\cal D$.

To define inverse powers of $|\det(E)|^{1/2}$ one now uses classical 
approximate identities of the form \cite{15} 
\be \label{3.2}
q^3\; V(R)^{1-3(1-q)}\approx \int_R d^3x\;  \det(\{V(R)^q,A(x)\})  
\ee
for $0<q<2/3$ to obtain $V(R)^{-p},\; 0<p<2$ for small regions $R$ 
of coordinate volume $\epsilon^3$  and approximates the 
integral in (\ref{3.2}) in terms of Poisson brackets with three holonomies.
This rewrites negative powers of $V(R)$ in terms of positive powers 
and commutators with holonomies and the latter operators are well
defined in the LQG representation. 

To tame the higly non-polynomial expressions that appear in $C$ in its 
quantisation one proceeds as follows \cite{15} (for illustrational 
purposes we consider here 
only the Euclidan part of the Lorentzian constraint and only the 
geometry contribution, Lorentzian part, cosmological constant and 
matter part can be treated by similar methods \cite{15}):
We partition the manifold into cells $\Box_p$ of coordinate volume 
$\epsilon^3$ and centre $p$ and approximates the integral involved 
in $C(n)$ as the Riemann sum $\sum_p\;n(p)\;C_(\Box_p)$ where $C_(\Box_p)$
is the integral of $C(x)$ over $\Box_p$. Then one approximates 
the ``electrical part'' of integrand of $C(\Box_p)$ in terms of 
quantities of the form (\ref{3.2}) with $R=\Box_p$ and 
$A$ replaced by holonomies along edges of coordinate length 
$\epsilon$ and the ``magnetic part'' in terms 
of holonomies along loops enclosing a surface of coordinate area 
$\epsilon^2$. When one now acts on a SNWF $T_\gamma$ over a 
graph $\gamma$, due to the properties 
of the volume operator mentioned, for sufficiently 
small $\epsilon$ one finds $C(\Box_p)\;T_\gamma=0$ unless 
$\Box_p$ contains a vertex $v\in V$ where $V$ is the vertex set 
of $\gamma$. 
This is because $V(R)$ acts non-trivially only at at least 
4-valent non-co-planar gauge invariant vertices $v$ or at least 3-valent 
non-coplanar non gauge invariant vertices $v$ and only if $v\in R$.   

One further approximates $p=v$ for such $p$ with $v\in \Box_p$ and  
chooses the edges involved in the electric part to be beginning
segements $s(e)$ of edges $e\in E$ adjacent to $v$ and the loops 
involved in the magnetic part along pairs $s^\epsilon(e),s^\epsilon(e')$ with 
$e\not=e'$ adjacent to $v$ connected by an ``arc'' $a^\epsilon_{v,e,e'}$
so that a loop $\alpha^\epsilon_{v,e,e'}=
s^\epsilon(e)\circ a^\epsilon_{v,e,e'}\circ s^\epsilon(e')^{-1}$ 
results with $\alpha^\epsilon_{v,e',e}=[\alpha^\epsilon_{v,e,e'}]^{-1}$ 
which does 
not intersect $\gamma$ except at interior points of $e,e'$. Then the 
final regulated operator acts on SNWF as  
\be \label{3.3}
C^\epsilon(n)\; T_\gamma=\sum_v\; n(v)\;\sum_{e\cap e'=v}
C^\epsilon_{v,e,e'} \; T_\gamma
\ee
In the non-standard topology mentioned above, we may choose for 
each $\gamma$ a sufficiently small $\epsilon(\gamma)$ meeting the conditions 
used in the derivation of (\ref{3.3}) and then the limit $\epsilon \to
0$ becomes trivial and amounts to define the limit operator by
\be \label{3.4a}
C(n)\; T_\gamma=\sum_v\; n(v)\;\sum_{e\cap e'=v}
C_{\gamma,v,e,e'}\;\; T_\gamma,\;\;
C_{\gamma,v,e,e'}:=C^{\epsilon(\gamma)}_{v,e,e'} 
\ee
Since $\epsilon(\gamma)$ is quite arbitrary, (\ref{3.4}) suffers from 
that arbitrariness, however, when looking for solutions of the constraints
$C(n),U(\varphi)-1$, that is, distributions which vanish 
on $C(n)\; T_\gamma,\;[U(\varphi)-1] \; T_\gamma$ for all $n,\varphi,
T_\gamma$ most of these ambiguities are washed away because (\ref{3.4})
is diffeomorphism covariant: Because the density weight is 
unity, by construction opertators 
resultiong from different choices of 
$\epsilon(\gamma)$ are related by a diffeomorphism so that the joint kernel 
is the same. Differences can only occur from different diffeomorphism 
equivalence classes of the loops $\alpha_{v,e,e'}$.

We can now proceed to compute the commutator of the $[C(m),C(n)]$. One finds 
\be \label{3.4}
\frac{1}{2}\;\sum_{v\not=v'}\; [m(v)n(v')-n(v) m(v')] \;
\sum_{e\cap e'=v}\;
\sum_{f\cap f'=v'}\;
[1-U(\varphi_{v,e,e';v',f,f'})]\;
C_{\gamma_{v',f,f'},v,e,e'},\; C_{\gamma,v',f,f'}\;
T_\gamma
\ee
where, again due to the properties of the volume opertor, 
the sums only involve vertices and edges of $\gamma$ and not 
the new edges and vertices generated by the arcs. The second graph is 
$\gamma_{v',f,f'}=\gamma\cup a_{v',f,f'}$ and the diffeomorphism 
displayed preserves $\gamma$ but 
takes care of the fact that the second action depends on whether 
one first acted at $v$ or $v'$. It was also used that the two 
$C$ expressions displayed commute for $v\not=v'$.     
   
On the other hand, the direct 
quantisation of the classical $K(m,n)=\{C(m),C(n)\}$ 
leads to an operator with 
similar action and properties. It is given by \cite{17}
\be \label{3.5}
K(m,n)=\sum_v\; \sum_{e\cap e'=v}\; [m(v)n(v+s(e))-n(v) m(v+s(e))] 
[U(\varphi_{\gamma,v,e'}-1] Q^{ee'}\; T_\gamma
\ee
where $s(e)=s^{\epsilon(\gamma)}(e)$
is again the segment of $e$ beginning at $v$ and 
$Q^{ee'}$ is a 
geometrical operator which quantises $q^{-1}$ by the methods above
independent of the choice $\epsilon(\gamma)$ and 
$\varphi_{\gamma,v,e}=\varphi^{\epsilon(\gamma)}_{v,e'}$ is 
a diffeomorphism that acts non-trivially only in the $\epsilon$ cube. 
Also this operator contributes only at the vertices of $\gamma$.
One could be 
content with this if the diffeomorphisms involved in (\ref{3.4}) and 
(\ref{3.5}) would depend on the same data and if $Q^{e,e'}$ could be 
related to the two $C$ expressions in (\ref{3.4}). This is, however, not 
possible because 
the vertices $v,v'$ belong to $\gamma$ while the vertex $v+s(e)$ does not
and because the two $C$ expressions mutually commute and thus are 
not able to produce an operator resembling $Q^{ee'}$ via a commutator.  
In order to improve this the second action of $C$ would need to involve also
the new vertices that the first action creates via the arcs 
but this leads to a 
mathematical anomaly. This is in more detail what we meant by closure 
with the wrong structure functions, while both (\ref{3.4}) and (\ref{3.5})
annihilate the space $L$ of diffeomorphism invariant distributions.\\
\\
We now switch to the framework of \cite{19} where one tries to take the 
limit $\epsilon\to 0$ using dual action of $C^\epsilon(n)$ on a subspace 
$L'$ of distributions on $\cal D$. One picks the $l\in L'$ of the 
general form 
\be \label{3.6}
l=\sum_{s'}\; l(s')\; <T_{s'},.>_{{\cal H}}
\ee
where the sum is over all SNW labels $s'$ with certain coefficients $l(s')$. 
Then
\be  \label{3.7}
l[C^\epsilon(n) T_\gamma]=\sum_v \; n(v)\;\sum_{e\cap e'=v}\;
\sum_{s'\in S^\epsilon_{s,\gamma,v,e,e'}} l(s')
<T_{s'},C^\epsilon_{\gamma,v,e,e'}\; T_s>
\ee
where $S^\epsilon_{s,\gamma,v,e,e'}$ is the set of all SNW into which 
$C^\epsilon_{\gamma,v,e,e'}\; T_s$ decomposes which are finitely many. 

The observation is now the same that makes the non standard operator 
topology limit work: The matrix elements 
$<T_{s'},C^\epsilon_{\gamma,v,e,e'}\; T_s>$ do not depend on $\epsilon$ 
because the inner product is diffeomorphism invariant (the diffeomorphisms 
act unitarily). Therefore, the only $\epsilon$ dependence in 
(\ref{3.7}) rests in the coefficient $l(s')$. If we choose $L'$ to consist
of those $l$ such that $l(s')$ is a  
continuous functional of $s'=(\gamma',j',\iota')$
with respect to its graph entry $\gamma'$ (e.g. a function 
of edge length with respect to a background metric) then the limit
$\epsilon\to 0$ can be carried out. Setting
$l_n(T_s):=\lim_\epsilon l(C^\epsilon(n) T_s)$ then a dual operator 
$C'(n)\;l:=l_n$ is defined. It is now easy to see that 
$[C'(m),C'(n)]=0$ because the commutator depends on different vertices 
and $L'$ consists of jointly continuous functions. One can even 
generalise this to more general actions of $C(n)$ which also 
involve the new vertices created \cite{19}. Note that $L'$ contains 
$L$ (for which $l(s)$ is the constant function for $s$ in the same 
diffeomorphism equivalence class).

For the same choice of $L'$ we can construct the dual action  of 
(\ref{3.5}) in the limit $\epsilon\to 0$. The result trivially 
vanishes if the lapse functions $m,n$ are continuous by the same argument. 
Thus we get consistently $[C'(m),C'(n)]=K'(m,n)=0$.\\
\\
This latter observation motivates the non-standard density weight: Similar to
\cite{21} consider $w=\frac{7}{6}$: It leads to an additional $\epsilon^{-1}$ 
factor in front of an expression the limit of whose action on $L'$ is 
finite, see (\ref{2.9}). Classically we have for $w=7/6$ that 
$C_w=C\;[\det(q)]^{1/12}$ and by the $\mathfrak{h}$ relations that 
$\{C_w(m),C_w(n)\}=-D(q^{-1}(m dn-n dm)[\det(q)]^{1/6})=:K_w(m,n)$ so that
$K_w(m,n)$ can 
be approximated by an expression of the symbolic form
\be \label{3.8}
\sum_v \;
{\rm Tr}([H_\epsilon(v)-H_\epsilon(v)^{-1}]^2\; \Phi_\epsilon(v)^3)\; 
\frac{m(v)n(v+\epsilon)-n(v)m(v+\epsilon)}{\epsilon}
|\det(\Phi_\epsilon(v))|^{-5/6}
\ee
which can be quantised using (\ref{3.2}) and above Riemann sum techniques
on $\cal D$. Now to define $K_w'(m,n)$ on some $L'$ will involve the 
limit
\be \label{3.8}
\lim_{\epsilon \to 0}\; 
\frac{m(v)n(v+s^\epsilon(e))-n(v)m(v+s^\epsilon(e))}{\epsilon} 
=\dot{e}^a(0)\;[m\;\partial_a n-n\;\partial_a m](v)
\ee
where $t\mapsto e(t),\;e(0)=v$ denotes the parametrisation of $e$
and thus produces exactly the combination of lapse functions that one 
has in the classical theory. The challenge then consists in 
quantising $C_w(m)$ for $w=7/6$ in such a way that 
$[C'_w(m),C'_w(n)]=i\;K'_w(m,n)$ on suitable $L'$ 
which involves letting $C_w(m)$ 
act on the new vertices it creates and to interpret  
combinations such as $\tilde{u}^a_{n,j}:=n E^a_j\;|\det(E)|^{-5/12}$ as 
three phase space dependent shift vector fields called electric 
shifts which motivates to let $C_w$ act on SNWF in a similar way as 
a linear combination of finite diffeomorphism operators would do, 
but with operator valued coefficients. 
The associated deformations caused by these diffeomorphisms have to be 
chosen carefully in order that the $\epsilon^{-1}$ factor in (\ref{2.9}) 
causes no singularity. The choice of $L'$ needs some form of analytic 
structure which is fed in by using dependence of $l(s')$ on a background 
metric.\\
\\
There is a price to pay whatever choice of $w$ one makes: For 
$w=1$ 
the norm of (\ref{3.3}) in the LQG Hilbert space is {\it independent
of $\epsilon$} and finite (due to diffeomorphism covariance of the construction
and this in fact motivates the non-standard topology) while  
the norm of (\ref{3.4}) in the LQG Hilbert space {\it converges to zero} for 
continuous $M,N$ as $\epsilon\to 0$. For $w=7/6$ it is opposite:
The norm in the LQG Hilbert space 
of the analog of (\ref{3.3}) {\it diverges} as $\epsilon\to 0$ while 
the norm of the analog of (\ref{3.4}) in the LQG Hilbert space 
{\it converges to a finite limit} which correctly depends on the 
wanted combinations of derivatives $M\partial N-N\partial M$.\\
\\
\\
It is now time to unveil the reason for why with standard density weight
$w=1$ the dual algebra $[C'(m),C'(n)]=0$ {\it must be Abelian} for the 
choice of $L'$ made in \cite{19} and its generalisations. In other 
words {\it its Abelian nature is physically correct}. To see 
this we note that 
$L'$ is a space of distributions over ${\cal D}={\cal D}_{{\rm SNWF}}$,
the {\it finite} linear span of SNWF. This is a dense and invariant 
domain for $C(n)$ because $C(n)$ acts only at the vertices of a graph
and the graphs involved in SNWF are finite, the number of SNWF involved 
in $\psi\in {\cal D}$ is finite. Therefore there is a substantial 
difference between the quantum state $C^\epsilon(n)\psi,\;\psi\in {\cal D}$ 
and the 
classical expression $C^\epsilon(n)$ (the Riemann sum regularisation 
of $C(n)$ sketched above): 
The former is a sum over a {\it finite} (say $N$) number of cells where 
$N$ is the number of vertices involved in $\psi$ while $C^\epsilon(n)$ 
is an {\it infinite} sum (for non compact $\sigma$; for compact $\sigma$ 
the number of cells still grows indefinitely as $\epsilon\to 0$). We
pick $\psi=T_\gamma$ and denote the situation symbolically as 
\be \label{3.9}
C^\epsilon(n)\; T_\gamma=\sum_{v\in V}\; n(v)\; C(\Box^\epsilon_v)\;
T_\gamma,\;\;
C^\epsilon(n)=\sum_{p\in P}\; n(p)\; C(\Box^\epsilon_p)
\ee
where $P$ denotes the set of centre points used in the  
partition into cubes $\Box^\epsilon_p$ of $\sigma$.
Computing commutators and Poisson brackets respectively 
yields again symbolically 
\ba \label{3.10}
[C^\epsilon(m),C^\epsilon(n)]            
\; T_\gamma &=& \frac{1}{2} \; 
\sum_{v,v'\in V}\; 
[m(v) n(v')-m(v') n(v)]\;
[C(\Box^\epsilon_v),C(\Box^\epsilon_{v'})]\; T_\gamma
\nonumber\\
\{C^\epsilon(m),C^\epsilon(n)\}
&=& \frac{1}{2}
\sum_{p,p'\in P}\; 
[m(v) n(v')-m(v') n(v)]\;
\{C(\Box^\epsilon_p),C(\Box^\epsilon_{p'})\}
\ea
Let us compute the classical Poisson brackets explicitly
for the explicit form used in \cite{15} 
\be \label{3.11}
C(\Box^\epsilon_p):=\sum_{a,b,c=1}^3 \;\epsilon^{abc}\; 
{\rm Tr}([H_{ab}(p)-H_{ba}(p)] H_c(p)\; \{V(\Box^\epsilon_p), 
H_c(p)^{-1}\})
\ee
where $H_a(p)$ is the holonomy from $p$ in direction $a$ by one unit 
of $\epsilon$ and 
$H_{ab}(p)=H_a(p)\;H_b(p+\delta_a)
H_a(p+\delta_b)^{-1}\;H_b(p)^{-1}$ is a plaquette holonomy while 
$V(\Box^\epsilon_p)=|\det(\delta\Phi(p)|^{1/2}$ where
$[\delta \Phi]^a_j(p)=\Phi^a_j(p)-\Phi^a_j(p-\delta_a)$ 
and
$\Phi^a_j(p)$ is the gauge covariant flux based at $p$ \cite{29}
through the boundary face of 
the two cubes
with centres $p,p+\delta_a$ with co-normal in $a$ direction 
and in direction $j$ wrt an ONB basis $\tau_j$ of su(2) wrt the trace metric. 
Since by construction 
\be \label{3.12}
\{\Phi^a_j(p),H_b(p')\}=\delta_{p,p'}\; \delta^a_b [\tau_j H_a(p)]
\ee
one finds to order $\epsilon^3$
\be \label{3.13}
\{C(\Box^\epsilon_p),C(\Box^\epsilon_{p'})\}=-
\sum_a\;\sum_{\sigma=\pm 1}\;\sigma\; \delta_{p',p+\sigma\delta_a}\;\; 
D^a_\epsilon(p),\;\;
D^a_\epsilon(p)=\sum_{b,c}\;
{\rm Tr}(H_{bc}(p)\; \Phi^c(p))\; 
\frac{{\rm Tr}(\Phi^a(p)\Phi^b(p))}{V(\Box^\epsilon_p)^2}
\ee
so that to leading order in $\epsilon$ 
\be \label{3.14}
\{C^\epsilon(m),C^\epsilon(n)\}
=-\frac{1}{2}\sum_{p\in P}\;\sum_a 
[m\;(\partial^\epsilon_a\; n)
-n\;[\partial^\epsilon_a\; m)](p)\; D^a_\epsilon(p)
\ee
with the lattice derivative $(\partial^\epsilon_a m)(p)=
m(p+\delta_a)-m((p-\delta_a)$. If we take the limit $\epsilon\to 
0$ of (\ref{3.14}) then we recover precisely $-D[q^{-1}(m\; dn-n\;dm]$
because 1. $D^a_\epsilon(p)$ is of order $\epsilon^2$, 2.
$m\partial^\epsilon_a n-n\partial^\epsilon_a m$ is of order $\epsilon$
and 3. and mosty importantly for the main argument of this work 
{\it the number of terms in the sum grows as $\epsilon^{-3}$}.
 In order that this holds, we need {\it all the contributions $p\in P$}
in order that the Riemann sum $\sum_{p\in P}\; \epsilon^3$ turns 
into the integral $\int_\sigma \;d^3x$. We also need that 
$V(\Box^\epsilon_p)>0$ for all $p,\epsilon$, i.e. that the classical 
metric is regular.

Let us now mirror this with the quantum computation in
(\ref{3.10}). Under the assumption 
that the commutator between the $v,v'$ contributions does not vanish 
identically we expect it to be non-vanishing at most when 
$v,v'$ are next neighbour vertices in the graph $\gamma$ since 
the quantum operator is constructed from local expressions. Let ${\cal N}_v$
be the set of next neighbour vertices $v'$ of $v\in V$. Then we obtain 
\be \label{3.15}
 [C^\epsilon(m),C^\epsilon(n)]            
\; T_\gamma=\frac{1}{2} \; 
=\sum_{v\in V}\;\sum_{v'\in {\cal N}_v} 
[m(v) n(v')-m(v') n(v)]\;
[C(\Box^\epsilon_v),C(\Box^\epsilon_{v'})]\; T_\gamma
\ee
Now even if in the best case the commutator left in (\ref{3.15}) is 
turned into a linear combination of diffeomorphism operators, there 
is no chance to match with (\ref{3.14}) because the number of 
terms involved is finite. Therefore, as for generic graphs $\gamma$
the next neighbours of $v$ are all away from $v$ by far more than $\epsilon$,
for such $\gamma$ (\ref{3.15}) trivially vanishes (perhaps modulo 
a diffeomorphism but that diffeomorphism has nothing to do with the 
diffeomorphism involved in $K(m,n)$). As this is automatically the case 
for sufficiently small $\epsilon$, there is no chance to match (\ref{3.14})
and (\ref{3.15}).    

In other words, even if in a semiclassical limit 
we have that the commutator in (\ref{3.15}) turns into the Poisson bracket 
(\ref{3.13}), that contribution is of order $\epsilon^2$ while 
$m(v) n(v')-m(v') n(v)$ is of order $\epsilon$ and we need an
order of $1/\epsilon^3$ terms to make the semiclassical limit non vanishing,
but there are only finitely many, namely $N$ of them.

Yet in other words, even if one would get an equality of the 
form $[C^\epsilon(m),C^\epsilon(n)]\;T)\gamma=K^\epsilon(m,n) T_\gamma$
we would find $K'(m,n)\equiv 0$ on any suitable $L'$ 
as soon as $K^\epsilon(m,n) T_\gamma$ is of the form
$\sum_{v,v'}\; [m(v) n(v')-m(v') n(v)]\; K_{\gamma,v,v'} T_\gamma$
with $K_{\gamma,v,v'}$ having diffeomorphism invariant matrix elements 
between a {\it finite number of} SNWF, with $m,n$ continuous
and with $l\in L'$ having continuous coefficients $l(T_s)$. \\
\\
\\
The discussion reveals that if one wants to avoid this triviality
and if one does not want to change the density weight, which as we showed in 
section \ref{s2} is problematic, then one must avoid that the sum over 
$P$ collapses to a sum over $V$ in (\ref{3.9}). More precisely, several 
conditions must be met at the same time in order that the quantum 
computation (\ref{3.15}) comes as close as possible to the classical 
compuation (\ref{3.14}). In the classical Riemann sum computation (\ref{3.14}) 
three things are happening simultaneously and are matched to 
each other: First a discretisation of space by cells, second a discretisation 
of the phase space labelled by those cells and third a discretisation of 
cell constraints by functions of the cell variables. In the quantum 
computation (\ref{3.15}) these three steps are also applied to the 
constraint operator {\it but not to the quantum state}. The quantum 
state still lives in the continuum and it is not subject to 
discretisation. Thus it is defined by a continuum of configuration 
quantum degrees of freedom (in the connection representation and 
spin degrees of freedom in the Fourier transformed representation).
Then two effects bring the classical and quantum computation drastically out 
of balance: first, due to the fact that SNWF are highly degenerate, 
the quantum constraint is ``blind'' for almost all of the cells, namely 
those that do not contain a vertex. Second, even for the cells that 
contain a vertex, the set of degrees of freedom that are changed on the quantum 
state by the action of the discretised constraint 
contain new ones with respect 
to which it was not already excited. In the literature this is referred 
to as ``graph changing action''. The fluctuations 
of these new excitations are therefore not 
controlled by the state one acts upon and thus avoid e.g. application of 
coherent state techniques.\\      
\\
To avoid the first effect, the state one acts upon should be non-degenerate.
To avoid the second effect, the discretised constraint should act on 
an invariant subspace of states.\\
\\ 
A first proposal that meets these 
two conditions is the the algebraic quantum gravity (AQG) programme 
\cite{32}. There one works with a single, fundamental, infinite abstract 
graph that can be embedded arbitrarily densely (i.e. with arbitrarily small
but finite spatial resolution) everywhere on $\sigma$ and the constraints,
which are considered as regulator free, preserve that fundamental 
abstract graph, but not all its subgraphs. Thus AQG is like a lattice 
gauge theory with the difference that the quantum state one acts upon 
decides about how densely the abstract graph is embedded. Since 
the abstract constraints of AQG do not close under commutators, the 
AQG framework was embedded into the master constraint programme 
\cite{10} which replaces all spatial diffeomorphism and Hamiltonian
constraints by a single one so that 
anomalies are of no immediate concern.  

A second proposal that meets these conditions is the Hamiltonian 
renormalisation programme \cite{18}. Here one still works with 
a Hilbert space $\cal H$ of concrete (embedded) 
states, however, the states in $\cal H$ are projected
into subspaces ${\cal H}_\epsilon = {\cal P}_\epsilon {\cal H}$ in a 
controlled way. The control consists in a partial order on the set $\cal E$ 
of labels $\epsilon$ with respect to which it is directed.
The $P_\epsilon$ arise as $P_\epsilon= J_\epsilon\;
J_\epsilon^\dagger$ where $J_\epsilon:\; {\cal H}_\epsilon \to {\cal H}$
is an isometric injection $J_\epsilon^\dagger J_\epsilon=1_{{\cal H}_\epsilon}$
which are fixed points of a renormalisation flow which grants that 
$\cal H$ is the inductive limit of the ${\cal H}_\epsilon$ which 
may or may not coincide with the LQG Hilbert space. The 
discretised constraints are also subject to renormalisation and yield 
a consistent family $C_\epsilon(n)$ at the fixed point (if it exists)
in the sense that 
\be \label{3.16}   
C_\epsilon(n)=J_{\epsilon\epsilon'}^\dagger 
C_{\epsilon'}(n)\;J_{\epsilon\epsilon'}
\ee
with $J_{\epsilon\epsilon'}=J_{\epsilon'}^\dagger J_\epsilon$
for all $\epsilon<\epsilon'$ which thus grants existence of 
a quadratic form $C(n)$ such that
\be \label{3.16}   
C_\epsilon(n)=J_{\epsilon}^\dagger 
C(n)\; J_{\epsilon}
\ee
The $C_\epsilon(n)$ {\it must not close} under commutators, even if the 
$C(n)$ do. An example where the validity of this procedure has been 
recently demonstrated is parametrised field theory which among other things 
also displays a non-trivial realisation of the hypersurface deformation 
algebra \cite{33}, in fact for density weight two rather than one. 
On the other hand,
in the corresponding
Hilbert space representation (a Fock representation) 
the metric operator does not annihilate the vacuum so that all Fock
states are non-degenerate. We therefore revisit PFT with density weight 
one and with degenerate vacuum in section \ref{s5} and show that 
nevertheless one can get the algebra to close.

A third proposal is to consider the representation \cite{30} 
of the holonomy flux algebra 
different from the LQG representation. It modifies it  
by a condensate $<\Omega^0,\; \Phi_f(S)\;\Omega^0>=\Phi^0_f(S)$ 
where $\Phi^0$ is a classical electric field. 
Choosing $\Phi^0$ to correspond to a non-degenerate 
metric, then $\Omega^0$ is a non-degenerate vacuum 
state in the sense of the next section. To see whether in this 
representation we can hope to make progress wrt the representation of 
$\mathfrak{h}$ we consider formally $C(n)\;w[F]\Omega_0$ which formally 
can be written
\be \label{3.17}
\int\;d^3x\; n\; B^a_j\; [e(\Phi_0+F)]_a^j\;w[F]\;\Omega_0,\;
e(G)_a^j=\frac{1}{2}\epsilon^{jkl} \epsilon_{abc} 
\frac{G^b_k\;G^c_l}{|\det(G)|^{1/2}} 
\ee
Note that $F$ is a distribution with the singularity structure of 
a $\delta$ distribution in 2 dimensions while $\Phi_0$ is smooth. 
Also the magnetic field itself is ill-defined.
To regularise (\ref{3.17}) 
we use a Rieman sum approximation of the 
integral by a sum over $\epsilon$ sized cells $\Box$ with centre 
$p_\Box$
and $\epsilon^2 B^a_j$ replaced by 
$B^a_j(\Box):={\rm Tr}(H(\alpha^a_\Box))\tau_j$ where 
$\alpha^a_\Box$ is an appropriate loop located in $\Box$ in the coordinate 
plane transversal to the $a$ direction. In order that the 
denominator $|\det(\Phi_0+F)|^{1/2}$ turns into something finite, we 
integrate
it over $\Box$ which in the limit $\epsilon\to 0$ makes the $\Phi_0$ 
dependence disappear from the denominator. The numerator then 
depends schematically on the term $\epsilon^4 (\Phi_0+F)^2$ where $\epsilon^4$
comes from the left over $\epsilon$ of the measure factor $\epsilon^3$ not 
absorbed by $B$ and the fact that we have to multiply both 
numerator and denominator by $\epsilon^3$ when we integrate the denominator 
over $\Box$. Thus the $\Box$ contribution to the 
numerator contains the three terms, schematically
\be \label{3.18}    
B(\Box)\;[\Phi_0(\Box)^2+2 \Phi_0(\Box)(F(\Box)+F(\Box)^2]
\ee
where $\Phi_0(\Box),F(\Box)$ are integrals over faces of $\Box$.
Suppose we use some of the methods of \cite{15} to define 
$|\det(F(\Box)|^{-1/2}$ which thus lets only those $\Box$ contribute that 
contain a vertex of the graph. Then the contribution of the 
first two terms in (\ref{3.18}) to the norm of (\ref{3.17}) vanishes 
in the limit $\epsilon\to 0$ because the number 
of contributing $\Box$ is constant 
so that altogether nothing has changed as 
compared to the situation without condensate. To change something, as argued 
above, 
all $\Box$ must contribute, thus $F$ itself must be excited already 
everywhere so that the denominator is finite for every $\Box$. 
Thus it appears that $\Phi_0$ by itself is not sufficient 
to achieve a non-anomalous $\mathfrak{h}$. \\
\\
\\
These qualitative  arguments suggest that we need to consider $F$ that 
correspond to an everywhere
excitated 
quantum geometry. This should define a new dense domain different from 
the finite 
linear span of spin network functions. 
In the next section we investigate qualitatively 
how this might be achieved by combination of coherent state and renormalisation 
methods.

\section{Qualitative investigation of $\mathfrak{h}$ in Hamiltonian 
renormalisation of LQG}
\label{s5}

As mentioned before and reviewed in \cite{18} in Hamiltonian renormalisation
we construct a sequence labelled by $s$ of families labelled by $\epsilon$
of triples $({\cal H}^{(s)}_\epsilon, J^{(s)}_{\epsilon,\kappa(\epsilon)},
C^{(s)}_\epsilon(n))$ where $\epsilon':=\kappa(\epsilon)\le 
\epsilon$ is a fixed element 
wrt the partial order $\ge$ (dictating how many degrees of freedom of the 
finer theory labelled by $\epsilon'$ are integrated out to reach the 
coarser theory labelled by $\epsilon$) and the entries of the triple
are Hilbert spaces, isometric embeddings $J^{(s+1)}_{\epsilon\epsilon'}\; 
{\cal H}^{(s+1)}_\epsilon\to  {\cal H}^{(s)}_{\epsilon'}$ and constraints 
respectively. This isometry condition together with the prescription 
\be \label{5.1}
C^{(s+1)}_\epsilon(n):=
[J^{(s+1)}_{\epsilon\epsilon'}]^\dagger\;
C^{(s)}_{\epsilon'}\;
[J^{(s+1)}_{\epsilon\epsilon'}],\; \epsilon'=\kappa(\epsilon)
\ee
defines a renormalisation flow starting from an initial triple that uses 
a classical discretisation such as the Riemann sum approximations that
were used in section \ref{s3}. While the initial triple suffers from 
many ambiguities, the intuition collected from statistical physics examples   
gives rise to the hope that theories labelled by different ambiguity
parameters flow into the same fixed point family
${\cal H}_\epsilon, J_{\epsilon,\epsilon'},C_\epsilon(n))$ 
defining a continuum Hilbert space $\cal H$ as the inductive limit of the 
${\cal H}_\epsilon$ and continuum operators $C(n)$ such that there exist
isometric embeddings $J_\epsilon:\; {\cal H}_\epsilon\to {\cal H}$ with
\be \label{5.2}
J_\epsilon=J_\epsilon'\; J_{\epsilon\epsilon'},\;\;
C_\epsilon(n)=J_\epsilon^\dagger\; C(n)\; J_\epsilon
\ee
The $C_\epsilon(n)$ {\it must not close} even if the $C(n)$ {\it do close}
since 
\be \label{5.3}
[C_\epsilon(m),C_\epsilon(m)]=
J_\epsilon^\dagger\; [C(m),C(n)]\; J_\epsilon
-
J_\epsilon^\dagger\; \{
C(m)\;[1_{{\cal H}}-P_\epsilon]\;C(n)
-C(n)\;[1_{{\cal H}}-P_\epsilon]\;C(m)\}\; J_\epsilon
\ee
where $P_\epsilon=J_\epsilon J_\epsilon^\dagger$ is the projection 
of the continuum theory into a subspace isomorphic to the discretised 
theory at resolution $\epsilon$. The correction terms proportional to 
$1_{{\cal H}}-P_\epsilon$ are expected to converge to zero as 
$\epsilon\to 0$ e.g. in the weak operator topology on $\cal H$, that is,
given $\psi,\psi'\in {\cal H}$ the correction terms in (\ref{5.3}) are 
expected to vanish at fixed $\psi,\psi'$ when sandwiched between 
$P_\epsilon \psi, P_\epsilon \psi'$ provided that $1_{{\cal H}}-P_\epsilon$ 
itself converges to zero when sandwiched between $\psi,\psi'$. These 
expectations are met in PFT \cite{33}.

The strategy to check for anomaly freeness of $\mathfrak{h}$ 
in Hamiltonian renormalised LQG would therefore be as follows:\\
Step 1:\\
Start with initial families of Hilbert spaces
$({\cal H}^{(0)}_\epsilon:=L_2(d\nu^{(0)}_\epsilon,{\cal A}_\epsilon)$  
defined as square integrable functions with respect to some measure 
$\nu^{(0)}_\epsilon$ on a space of connections ${\cal A}_\epsilon$ as well
as with discretised constraints $C^{(0)}_\epsilon(n)$ and discretised
``would be'' commutators $K^{(0)}_\epsilon(m,n)$ thereof. \\
Step 2:\\
Construct the flow of these where isometry translates into a 
flow of measures $s\mapsto \nu^{(s)}_\epsilon$.\\
Step 3\\
Compute the corresponding fixed points and check $\mathfrak{h}$.\\
\\
Since the completion of step 3 will be very difficult in practice 
because the computation of the fixed point will require a large number 
of iterations of the renormalisation step, we consider the following 
algebra check after a finite number $s$ of iterations:\\
Step 3$_{s,N}$:\\ 
Consider the matrix elements of the combination
\be \label{5.4}
[C^{(s)}_\epsilon(m),C^{(s)}_\epsilon(n)]-
K^{(s)}_\epsilon(m,n)
\ee
between $P^{(s)}_{\epsilon,\epsilon'_N}\; \psi_{\epsilon'_N},\;
P^{(s)}_{\epsilon,\epsilon'_N}\; \psi'_{\epsilon'_N}$ for fixed
$\psi_{\epsilon'_N},\;\psi'_{\epsilon'_N}\in {\cal H}^{(s)}_{\epsilon'_N}$ 
where $\epsilon'_N=\kappa^N(\epsilon)$ is the $N$-fold refined theory   
and $P^{(s)}_{\epsilon\epsilon'_N}=
J^{(s)}_{\epsilon\epsilon'_N} [J^{(s)}_{\epsilon\epsilon'_N}]^\dagger$.
Here the fixed $N$ should be as large a number as practically possible.
Then, if these matrix elements become smaller as $s$ increases and as 
$\epsilon$ decreases, one would have strong evidence for convergence to 
the fixed point and vanishing of the anomaly. For $s,N\to \infty$ 
this step becomes step 3.\\
\\
These steps have not been carried out yet for LQG not even in the weakened 
version (\ref{5.4}). However, we want to sketch at least how one starts 
the flow, say for the case that $\sigma$ is compact with periodic 
boundary conditions. By the Bieberbach theorem \cite{34}
we may assume w.l.g.  
that $\sigma$ is a 3-torus.
First we do not consider all possible graphs but only those 
which can be sensibly labelled by a controllable set $\cal E$ and such 
that the discretised classical 
degrees of freedom labelled by $\epsilon \in {\cal E}$ 
still separate the points of the classical phase space. For instance, 
these could be cubical lattices $\gamma_\epsilon$ in $\sigma$ where 
$\epsilon'\le \epsilon$ iff $\gamma_\epsilon$ is a sublattice of 
$\gamma_{\epsilon'}$ and $\epsilon$ could be the lattice spacing with respect 
to some coordinates on $\sigma$. 
We discretise holonomies on the edges of $\gamma_\epsilon$ and fluxes on a 
similarly chosen dual cubical cell complex $\gamma_\epsilon^\ast$. Then we 
take some discretisations $C^I_\epsilon(n_I)$ of the $C^I(n_I)$ where 
\be \label{5.5}
C^I=f^{Ij}\;k\; B^a_j\; e^a_k,\;
B^a_j=\epsilon^{abc} F_{bc}^k\;\delta_{jk},\; 
e_a^j=q_{ab} \delta_{jk}\;\frac{E^b_k}{\sqrt{\det)q)}},\;
f^{Ij}\;_k=\left( \begin{array}{cc}
2\delta^j_k & I=0\\
\epsilon^{ljm}\; \delta_{mk} & I=l
\end{array}
\right.
\ee
are the density weight one 
building blocks of the extended master constraint \cite{10,32}.
For $I=l$ these have been reused more recently with non-standard 
density weight and were called ``electric 
diffeomorphisms'' \cite{21,21a} there. As pointed out in \cite{10,32}
and as follows from the general results established in \cite{15} in contrast 
to $D(u)$ the constraints 
$\vec{C}(\vec{N})=D(u)_{u=\vec{N}\cdot E/\sqrt{\det(q)}}$ which are 
classically equivalent to $D(u)$ {\it for non-degenerate} $q$ can be quantised 
in the LQG representation and by (\ref{5.5}) display a more balanced 
structure as far as the algebraic structure of all 4 constraints is 
concerned and which has the advantage that not only the exponentiated version 
of $D(u)$ exists in the quantum theory.
The price to pay is that now all constraints close with structure functions
only and these are classically well defined only when the 
metric is non-degenerate. We denote them as 
\be \label{5.6}
\{C^I(m_I),C^J(n_J)\}=:C_K(f^K\;_{IJ}(m^I,n^J;q)=:K(m,n)
\ee
We now proceed as in \cite{32}: We introduce holonomy flux variables on 
$\gamma_\epsilon$ and discretise the constraints using them, see \cite{32}
for details, resulting in classical functions $C^I_\epsilon(n_I)$. We
do the same with the right hand side of (\ref{5.6}) resulting in 
$K_\epsilon(m,n)$. By construction we have to leading order in $\epsilon$
\be \label{5.6}
\{C^I_\epsilon(m_I),C^J_\epsilon(n_J)\}=K_\epsilon(m,n)
\ee
and (\ref{5.6}) converges to (\ref{5.5}) pointwise $Z$ on the phase space.

Then we quantise $C^I_\epsilon(m_I), \; K_\epsilon(m,n)$ with 
all flux depending variables ordered to the right \cite{32} and denote 
the resulting operators on ${\cal H}^{(0)}_\epsilon$ by 
$C^{I(0)}_\epsilon(m_I), \; K^{(0)}_\epsilon(m,n)$ that start the 
renormalisation flow. 
Here ${\cal H}^{(0)}_\epsilon$ is $L_2(SU(2),d\mu_H)^{N_\epsilon}$ where 
$\mu_H$ is the Haar measure, $N_\epsilon=|E(\gamma_\epsilon)|$ 
the number of edges of $\gamma$, the fluxes being quantised as right invariant
vector fields on corresponding copies if SU(2) and the holonomies as 
multiplication operators corresponding to that copy. 
Then we wish to study
\be \label{5.7}
\Delta^{(0)}_\epsilon(m,n):=
[C^{I(0)}_\epsilon(m_I),C^{J(0)}_\epsilon(n_J)\}-i\;K^{(0)}_\epsilon(m,n)
\ee
As we showed above, (\ref{5.7}) must not vanish, not even for the 
fixed point family. However, one may hope that (\ref{5.7}) is small
in a suitable operator topology for small $\epsilon$ even for the initial
``naive'' discretisation. We consider the following topoplogy which can 
be argued to be as close as possible to the topology of pointwise convergence 
on the classical on phase space: We consider the coherent states 
$\psi^\epsilon_Z\in {\cal H}^{(0)}_\epsilon$ \cite{22} which take as an 
input a point $Z$ in the classical continuum phase space, map it to elements 
of $g_e\in SL(2,\mathbb{C})$, one for each edge of $\gamma_\epsilon$, take 
a coherent superposition of SNWF over a single edge weighted by 
corresponding irreducible representations of SU(2) anaytically continued to
$SL(2,\mathbb{C})$ and by a Gaussian in the corresponding spin quantuim 
number and finally one takes the tensor product over all edges and 
normalises the result. The construction of the $SL(2,\mathbb{C})$ element 
and the Gaussian factor are not randomly chosen but in fact follow from 
the complexifier machinery \cite{22}. In more detail one constructs  
\be \label{5.7a}
\psi_{e,Z}(A)=\sum_{2j=0}^\infty\; (2j+1)\; e^{-tj(j+1)}\;
{\rm Tr}(\pi_j(g_e(Z)\;H_e(A)^{-1}))
\ee
where $Z=(A^0,E^0)$ is a point in the classical phase space,  
$g_e(Z)=\exp(i E^0_j(S_e)))\;H_e(A_0)\in SL(2,\mathbb{C})$
where $S_e$ is the face in $\gamma_\epsilon^\ast$ dual to $e$. Then
\be \label{5.7b}
\psi_\epsilon(Z):=\prod_e\; \frac{\psi_e}{||\psi_e||}
\ee
These coherent states are known 
to be sharply peaked on points $Z_\epsilon$ where $Z_\epsilon$ encodes 
the discretised variables constructed from $Z$ and restricted to the edges 
and faces of $\gamma_\epsilon$ and $\gamma_\epsilon^\ast$ respectively
(basically the collection of the $g_e(Z)$).
Also there is a measure $\rho_\epsilon$ on the cotangent bundle phase space 
$\Gamma_\epsilon=
SL(2,\mathbb{C})^{N_\epsilon}
\cong T^\ast(SU(2)^{N_\epsilon})$ coordinatised by $Z_\epsilon$ which leads to 
a resolution of unity 
\be \label{5.8}
\int_{\Gamma_\epsilon}\; d\rho_\epsilon(Z)\; 
\psi_\epsilon(Z)\; <\psi_\epsilon(Z),\;>_{{\cal H}^{(0)}_\epsilon}  
=1_{{\cal H}^{(0)}_\epsilon}  
\ee
Then 
\be \label{5.9}
\Delta^{(0)}_\epsilon(m,n)\psi_\epsilon(Z)     
=\int\; d\rho_\epsilon(Z')\;
<\psi_\epsilon(Z'),\Delta_\epsilon(m,n)\psi_\epsilon(Z)>\;     
\psi_\epsilon(Z')
\ee
Due to sharp peakedness 
\be \label{5.10}
<\psi_\epsilon(Z'),\Delta^{(0)}_\epsilon(m,n)\psi_\epsilon(Z)>\;     
=<\psi_\epsilon(Z),\Delta_\epsilon(m,n)\psi_\epsilon(Z)>\;     
\;
<\psi_\epsilon(Z'),\Delta^{(0)}_\epsilon(m,n)\psi_\epsilon(Z)>\;     
\ee
plus corrections in $\epsilon$ which are subleading provided 
that the estimates performed for similar operators in \cite{32}
carry over to the present case.  

Then  
\be \label{5.11}
\Delta_\epsilon(m,n)\psi_\epsilon(Z)     
=<\psi_\epsilon(Z),\Delta_\epsilon(m,n)\psi_\epsilon(Z)>\;     
\psi_\epsilon(Z)     
\ee
plus corrections in $\epsilon$ which are subleading provided the above 
assumptions hold. 
It remains to compute the expectation values 
\be \label{5.12}
<\psi_\epsilon(Z),\Delta_\epsilon(m,n)\psi_\epsilon(Z)>\;     
\ee
Again, this kind of calculation has been carried out in \cite{32}
already and one finds 
\be \label{5.13}
<\psi_\epsilon(Z),\Delta^{(0)}_\epsilon(m,n)\psi_\epsilon(Z)>
=[\Delta_\epsilon(m,n)](Z),\;\;
\Delta_\epsilon=\{C^I_\epsilon(m_I),C^J_\epsilon(n_J)\}-K_\epsilon(m,n)
\ee
plus corrections in $\epsilon$ which are subleading provided the above 
assumptions hold. The latter quantity is known to converge to 
zero at fixed $Z$ as $\epsilon$ tends to zero. \\
\\
This shows that, modulo the above reservations, for suffciently
small $\epsilon$ the matrix elements of $\Delta^{(0)}_\epsilon(m,n)$ 
are almost diagonal and approach the classically discretised value, the latter
approaching the classical continuum integral. The validity of the calculation 
and the estimates alluded to {\it rest on the assumtion that $Z$ encodes 
a non-degenerate metric}. Note that coherent states on $\gamma_\epsilon$ 
are non-degenerate 
in the sense of non-vanishing volume expectation values for regions 
containing a vertex of $\gamma_\epsilon$. This is the case 
even if $Z$ is degenerate but 
in this case the $\epsilon$ corrections mentioned above are not subleading.
For details the reader is referred to \cite{22,32}.
This underlines once more the importance
of the non-degeneracy condition even at the quantum level. 

More details of the concrete calculation sketched above will appear 
elsewhere. Note however, that exact closure can only be expected for the 
continuum operator which is the critical theory one tries to produce from the 
renormalisation flow.

\section{Anomaly free, density one parametrised field theory}
\label{s6}

In the first subsection we motivate and define the density 
weight one model, in the second we 
define the LQG inspired Hilbert space representation in which 
quantum non-degeneracy is manifest, in the third 
we define the regularised constraints, in the fourth we remove the
regulator on dual constraints wrt a suitable habitat (space 
of distributions), in the fifth  
we verify the anomaly representation of these dual constraints, 
in the sixth we illustrate how in the degenerate representations 
considered in \cite{25,26} this model would yield an anomalous algebra 
and pin point that indeed the anomaly is {\it caused by degeneracy} and in 
the seventh we comment on the construction of a physical Hilbert space for 
this model.

\subsection{Motivation and definition of the model}
\label{s6.1} 

The constraints of 1+1 dimensional 
parametrised field theory \cite{24,25,26} can be 
written as 
\be \label{6.1}
\tilde{D}=\Pi\;\Phi'+P\; T'+Y\; X',\;\;
\tilde{C}=\frac{1}{2}[\Pi^2+(\Phi')^2]+P\;X'+Y\; T'
\ee
where $T,X$ are the embedding fields with conjugate momenta $P,T$, the 
massless Klein-Gordon field pulled back by $T,X$ and its conjugate momentum
are denoted as $\Phi,\Pi$. A prime denotes derivation with respect to 
the angular variable $x\in [0,1)$ and a dot derivation with respect to the 
time foliation parameter $t\in \mathbb{R}$. We have e.g.
\be \label{6.2}
\{\Pi(x),\Phi(y)\}=\delta(x-y)
\ee
where $\delta(x)$ is the 1-periodic delta-distribution. The constraints 
satisfy the classical hypersurface deformation algebra $\mathfrak{h}$
\be \label{6.3}
\{\tilde{D}(u),\tilde{D}(v)\}=-\tilde{D}([u,v]),\;\; 
\{\tilde{D}(u),\tilde{C}(\tilde{n})\}=-\tilde{C}([u,\tilde{n}]),\;\; 
\{\tilde{C}(m),\tilde{C}(n)\}=-\tilde{D}([\tilde{m},\tilde{n}]),\;\;
[u,v]=u\;v'-v\; u'
\ee
This does not resemble the form that $\mathfrak{h}$ has in GR. 
There are two reasons for this which are due to the pecularity of two 
spacetime dimensions. First, the constraints (\ref{6.1}) have density weight 
two rather than one, because tensors of rank $(a,b)$ are scalar 
densities of weight $b-a$. Second, (\ref{6.3}) does not display any 
structure functions. This is because a spatial metric $q$ is just a scalar 
density of weight two, hence the density weight one scalar constraints 
would be $C=\sqrt{q}^{-1} \tilde{C}$ and their Poisson brackets would yield  
$q^{-1} D=q^{-2} \tilde{D}$ instead, 
which explains why there is no $q$ dependence
in the Poisson brackets of the $\tilde{C}$. Note also that the smearing 
functions are naturally vector fields i.e. scalar densities of weight $-1$.

Accordingly, to shed light on the complex of questions that concerns us 
in the present work - density weights, structure functions, anomaly freeness,
non-degeneracy, habitats - the form of the constraints (\ref{6.1})   
is not useful. To make the analogy with (Euclidian) GR manifest we 
relabel the canonical pairs
\be \label{6.4}    
A_1:=T,\; E^1:=P;\; A_2:=X,\; E^2:=Y
\ee
and rewrite the constraints in these variables however with density 
weight unity for the Hamiltonian constraint
\be \label{6.5}  
D:=\Pi\;\Phi'+A_1'\; E^1+A_2' \; E^2,\;
C=[\frac{1}{2}[\Pi^2+(\Phi')^2]+E^1\;A_2'+E^2\; A_1']\;|E^1 E^2|^{-1/2}
\ee
There is no curvature in one dimension but $A_1',A_2'$ can be 
considered as a 
substitute depending like a curvature 
on the derivative of the ``connections'' $A_1,A_2$
(which are actually scalar fields). We also have introduced the 
density two valued metric $q:=|E^1 E^2|$. The constraints (6.1) and 
(\ref{6.2}) are classically equivalent iff $q>0$ i.e. if that metric is 
{\it non-degenerate}.

We compute (note that $u,v$ are scalar densities of weight $-1$ while
$m,n$ are scalar densities of weight zero) 
\ba \label{6.6}
&& \{D(u),D(v)\} = -D([u,v]),\;\; 
\{D(u),C(n)\}=-C(u[n]),\;\; 
\nonumber\\
&& \{C(m),C(n)\}=-{D}([m\; dn-n\;dm] \; q^{-1})+
C([m\; dn-n\;dm]\;\frac{1}{2}(
\frac{E^1}{E^2}+\frac{E^2}{E^1})q^{-1/2})
\ea
which resembles $\mathfrak{h}$ of GR more closely than (\ref{6.3}) 
because $u,v,D$ and $m,n,C$ assume their standard density weight and because
the $C(n)$ close with non-trivial structure functions. In some sense these 
structure functions are even more complicated than the ones of GR, hence 
the quantisation of this model in a LQG inspired representation will be 
a rather stringent test of the validity of the viewpoint that 
density weight one and non degeneracy are intimately connected while 
yielding anomaly freeness on suitable habitats.

\subsection{LQG inspired Hilbert space representation}
\label{s6.2}

In its density weight two versions, the constraints (\ref{6.1})  
are naturally quantised in a Fock representation. This option 
is not avaliable for the density weight one version (\ref{6.5}) because 
an operator valued distribution such as 
\be \label{6.7}    
Q:=|E^1 E^2|^{1/2}
\ee
is ill-defined in that Fock represention, at least as far as the geometry 
sector is concerned. We thus adopt the ``hybrid quantisation approach''
employed in Loop Quantum Cosmology (LQC) \cite{35} and consider the usual
Fock representation ${\cal H}_F$ for the matter sector 
\cite{24,33} and an LQG inspired representation ${\cal H}_G$
for the geometry sector. The total representation space is then
the tensor product ${\cal H}={\cal H}_F\otimes {\cal H}_G$.

As motivated in section \ref{s1} the corresponding geometry vacuum 
$\Omega_G$ is 
annihilated by the ``electric flux operators'' $E[f]$ and the representation 
is discontinuos with respect to the ``holonomy operators'' $H[g]$ where 
$f,g$ are a pair of scalar smearing function and ($I=1,2$)
\be \label{6.8}    
E[f]=\int_{[0,1)}\; dx\; f_I(x)\; E^I(x),\;\;     
H[g]=\exp(i\int_{[0,1)}\; dx\; g^I(x)\; A_I'(x),\;\;     
\ee
That the derivative of $A_I'$ instead of $A_I$ appears in $H[g]$
is justified 
by the fact that the spatial manifold $\sigma=[0,1)$ is a loop, i.e. 
a circle, hence by ``Stokes theorem'' the holonomy
$H(g)$ is the exponential of the ``magnetic flux''. Explicitly we have 
in terms of the original variables that
\be \label{6.8a} 
<g^I,A'_I>_{L_2([0,1),dx)}=-<g^{I\prime},A_I>
+T(1)\;g^1(1)-T(0)\;g^1(0)
+X(1)\;g^2(1)-X(0)\;g^2(0)
\ee
Physically we have $T(1)=T(0),\; X(1)=X(0)+1$ as $X$ is an angular coordinate.
If we impose, as we will do in what follows, 
that $T(0),X(0)$ assume fixed values then the boundary term of the variation 
of (\ref{6.8a}) vanishes even if $g^I$ is not periodic as we will consider 
below. Thus we may use that $\delta<g^I,A_I'>=-<g^{I\prime},\delta A_I>$.
 
Then the representation is compeletly defined by the relations
\ba \label{6.9}
&& E[f]^\ast=E[f],\;H[g]^\ast=H[-g],\; H[g]\; H[\tilde{g}]=H[g+\tilde{g}],\;
[E[f],E[\tilde{f}]]=0,\;
[H[g],H[\tilde{g}]=0,\;
\nonumber\\
&& [E[f], H[g]]=<f,g'>\; H[g],\;
E[f]\Omega_G=0,\;
<\Omega_G, H[g]\Omega_g>=\delta_{g,0}
\ea
which clearly resembles the LQG representation. The vector $\Omega_G$
is cyclic and the span of the ``Weyl states''
$H(g)\Omega_G$ is dense (thus replacing the spin network functions).

In this representation 
it is much simpler to find non-degenerate states according
to the definition in section \ref{s4}, than in the LQG representation. 
This is because the Weyl states
$H[g]\Omega_G$ already diagonalise the ``volume operator'' 
\be \label{6.10}
V(O):=\int_O\; dx \; Q(x)
\ee
for any open $O\subset [0,1)$ namely 
\be \label{6.11}
V(O)\;H[g]\Omega_G=V_g(O),\;H[g]\Omega_G,\;\;
V_g(O):=\int_O\; dx \; Q_g(x),\;
Q_g(x):=|g^{1\prime}\;g^{2\prime}|^{1/2}(x)
\ee
thus $H(g)\Omega_G$ is non-degenerate iff $g^{1\prime}, g^{2\prime}$ 
are nowhere vanishing in $[0,1)$. We call such $g$ also non-degenerate.
Thus non-degenerate $g$ are strictly monotonous and thus are not periodic 
but ``angular functions'' which is why we required $\delta A_I(0)=0$ above.

\subsection{Constraint regularisation}
\label{s6.3}

We formally apply (\ref{6.5}), with all dependence on $E^1,E^2$ 
ordered to the right, to tensor products of Fock states with 
Weyl states with non-degenerate $g$
\ba \label{6.12}
&& D[u]\;\psi_F\otimes \;H[g]\Omega_G
=[\int\; dx\; u(x)\; 
[d(x)\otimes 1_G+1_F\otimes \{
A_1'\; g^{1\prime} 
+A_2'\; g^{2\prime}\}(x)]\; \psi_F\otimes \;H[g]\Omega_G
\nonumber\\
&& C[n]\;\psi_F\otimes \;H[g]\Omega_G
=[\int\; dx\; n(x)\; 
[h(x)\otimes 1_G+1_F\otimes \{
A_1'\; g^{2\prime}
+A_2'\; g^{1\prime}\}(x)]\;Q_g(x)^{-1}]\;
\psi_F\otimes \;H[g]\Omega_G 
\nonumber\\
&& d(x)=:\;\Pi(x)\Phi'(x)\;:,\;
h(x)=:\;\frac{1}{2}[\Pi(x)^2+(\Phi'(x))^2]\;:
\ea
where $:\;\;:$ denotes the Fock space normal ordering. 

As the representation is irregular with respect to the connection, the 
objects $A_I'(x)$ in (\ref{6.12})
do not even exist as operator valued distributions. We 
therefore use the classical identity
\be \label{6.13}
A_I'(x)= \lim_{N\to \infty}\;\lim_{s\to 0} 
\frac{H[j_{Ix}^{s,N}]-1}{is},\;
[j_{Ix}^{s,N}]^J(y)=s\; \delta_I^J\delta_N(x,y),\;
\delta_N(,y)=\sum_{k\in \mathbb{Z},|k|\le N}\;e^{2\pi\;i\;(x-y)}
\ee
to regularise (\ref{6.12}) as 
\ba \label{6.14}
&& D_{s,N}[u]\;\psi_F\otimes \;H[g]\Omega_G
=[\int\; dx\; u(x)\; 
[d(x)\otimes 1_G+1_F\otimes \{
g^{1\prime} \frac{H[j_{1\cdot}^{s,N}]-1}{is} 
+g^{2\prime}\; \frac{H[j_{2\cdot}^{s,N}]-1}{is} \}(x)]\; 
\psi_F\otimes \;H[g]\Omega_G
\nonumber\\
&& \hat{C}_s[n]\;\psi_F\otimes \;H[g]\Omega_G
=[\int\; dx\; n(x)\; 
[h(x)\otimes 1_G+1_F\otimes \{
g^{2\prime}\; \frac{H[j_{1\cdot}^{s,N}]-1}{is} 
+g^{1\prime}\;\frac{H[j_{2\cdot}^{s,N}]-1}{is} \}(x)]\;Q_g(x)^{-1}]
\times
\nonumber\\
&& \psi_F\otimes \;H[g]\Omega_G 
\ea
This expression is still formal because the Lebesgue integral over vectors 
of the form $F(x)\;H[g_x]\Omega_G$ with $F$ a continuous function 
and $g_x=g_y$ iff $x=y$ has zero norm in ${\cal H}_G$. We thus in addition
introduce a Riemann sum approximation of the inegral by intervals $\Box$
of coordinate size $\epsilon$ and centre $p_\Box$ and obtain
\ba \label{6.15}
&& D_{s,N,\epsilon}[u]\;\psi_F\otimes \;H[g]\Omega_G
=\epsilon\sum_\Box \;[ u\; 
[d\otimes 1_G+1_F\otimes \{
g^{1\prime} \frac{H[j_{1\cdot}^{s,N}]-1}{is} 
+g^{2\prime}\; \frac{H[j_{2\cdot}^{s,N}]-1}{is} \}]]_{x=p_\Box}\; 
\psi_F\otimes \;H[g]\Omega_G
\nonumber\\
&& \hat{C}_{s,\epsilon}[n]\;\psi_F\otimes \;H[g]\Omega_G
=\epsilon\sum_\Box \;[ n\; 
[h\otimes 1_G+1_F\otimes \{
g^{2\prime}\; \frac{H[j_{1\cdot}^{s,N}]-1}{is} 
+g^{1\prime}\;\frac{H[j_{2\cdot}^{s,N}]-1}{is} \}]\;Q_g^{-1}]_{x=p_\Box}\;
\psi_F\otimes \;H[g]\Omega_G 
\nonumber\\
&&
\ea
and the sum is finite as $[0,1)$ is compact. 

\subsection{Regulator removal}
\label{s6.4}

We consider $\cal D$, the finite linear span of the $b\otimes w[g]$
where the countable system of states $b$ provides an orthonormal Fock 
basis of ${\cal H}_F$ and $w[g]:=H[g]\Omega_G$. Then an algebraic 
distribution $l\in {\cal D}^\ast$ is of the form
\be \label{6.16}
l=\sum_{b,g}\; l(b,g)\; <b\otimes w[g],\;.>_{{\cal H}}
\ee
Its action on (\ref{6.15}) is 
\ba \label{6.17}
&&l[D_{s,N,\epsilon}[u]\; b\otimes w[g]]
=\epsilon\sum_{\Box,\hat{b},\hat{g}} \;l(\hat{b},\hat{g})\;[ u\; 
[<\hat{b},d\; b>\; \delta_{\hat{g},g}+\delta_{\hat{b},b}\; \{
g^{1\prime} \frac{\delta_{\hat{g},g+j_{1\cdot}^s}-\delta_{\hat{g},g}}{is} 
+g^{2\prime} \frac{\delta_{\hat{g},g+j_{2\cdot}^s}-\delta_{\hat{g},g}}{is} 
\}]]_{x=p_\Box}\; 
\nonumber\\
&=& \epsilon\sum_{\Box} \;[ u\;[\sum_{\hat{b}}\; <\hat{b},\; d\;b>
l(\hat{b},g)
+g^{1\prime}\;\frac{l(b,g+j_{1\cdot}^{s,N})-l(b,g)}{is} 
+g^{2\prime}\;\frac{l(b,g+j_{2\cdot}^{s,N})-l(b,g)}{is}]]_{x=p_\Box} 
\nonumber\\
&&l[C_{s,N,\epsilon}[n]\; b\otimes w[g]]
= \epsilon\sum_{\Box} \;[ n\;\{[\sum_{\hat{b}}\; <\hat{b},\; h\;b>
l(\hat{b},g)]
\nonumber\\
&& +g^{1\prime}\;\frac{l(b,g+j_{2\cdot}^{s,N})-l(b,g)}{is} 
+g^{2\prime}\;\frac{l(b,g+j_{1\cdot}^{s,N})-l(b,g)}{is}\}\;
Q_g^{-1}]_{x=p_\Box} 
\ea
Taking $\epsilon\to 0$ first returns the Riemann sum into an integral.
Then taking $s\to 0$ gives 
\be \label{6.18}
\lim_{s\to 0}\; \frac{l(b,g+j_{I\cdot}^{s,N})-l(b,g)}{is}  
=-i\int\; dy\; \delta_N(x,y)\; \frac{\delta l(b,g)}{\delta g^I(y)}
\ee
i.e. the integral of the regularised and smooth $\delta$ distribution
against the functional derivative of $l$. Then taking $N\to \infty$
reduces (\ref{6.18}) to the functional derivative at $x$. Accordingly,
taking the three limits in that order defines the dual constraints
\ba \label{6.19}
&& 
(D'[u]\; l)(b\otimes g)=\sum_{\hat{b}}\; <\hat{b},\; d[u]\;b>\; l(\hat{b},g)
-i\int\;dx\;u(x)\;[
g^{1\prime}(x)\;\frac{\delta l(b,g)}{\delta g^1(x)}
+g^{2\prime}(x)\;\frac{\delta l(b,g)}{\delta g^2(x)}]
\\
&& (C'[n]\; l)(b\otimes g)=\sum_{\hat{b}}\; <\hat{b},\; h[n]\;b>\; l(\hat{b},g)
-i\int\;dx\;n(x)\;[
g^{1\prime}(x)\;\frac{\delta l(b,g)}{\delta g^2(x)}
+g^{2\prime}(x)\;\frac{\delta l(b,g)}{\delta g^1(x)}] Q_g(x)^{-1}
\nonumber
\ea
Introducing the notation 
\be \label{6.20}
<u,g^{I\prime}\;\delta_{g^J}>:=
\int \; dx\; u(x)\; g^{I\prime}(x)\; \frac{\delta}{\delta g^J(x)}
\ee
we may write the dual operations in the compact form
\ba \label{6.21}
D'[u] \; l=
\sum_{b,g}\;
[\sum_{\hat{b}}\; <d[u]\hat{b},\;b>\; l(\hat{b},g)
-i \;<u,\;(g^{1\prime}\;\delta_{g^1}+g^{2\prime}\;\delta_{g^2}>\;
l(b,g)]\;<b\otimes w[g],.>
\nonumber\\
C'[n] \; l=\sum_{b,g}\;
[\sum_{\hat{b}}\; <h[n]\hat{b},\;b>\; l(\hat{b},g)
-i \;<n\;Q_g^{-1},\;(g^{1\prime}\;\delta_{g^2}+g^{2\prime}\;\delta_{g^1}>\;
l(b,g)]
\;<b\otimes w[g],.>
\ea

\subsection{Algebra of dual constraints}
\label{s6.5}

The domain of definition of $D'[u],\; C'(n)$ is given by functionals 
$l(b,g)$ which are functionally differentiable with respect to $g$
and which either have support on non-degenerate $g$ or, as we 
motivated frequently in this paper, are only tested 
with respect to non-degenerate $g$. In order that we can 
compute commutators we need this domain to be invariant, hence 
$l$ should at least be twice functionally differentiable and its support 
should not be changed by taking faunctional derivatives or 
alternatively also higher derivatives 
should be tested with non-degenerate $g$ only.
This domain is certainly non-trivial, any smooth function $F$ in $N$
variables of the form $<j_k,g>$ for smooth functions $j_k,\;k=1,..,N$ 
is in this domain, even under any order of constraint actions.\\
\\
{\bf Diffeomorphism-Diffeomorphism}\\
\\
We have 
\ba \label{6.21}
&& [D'[u]\;D'[v]\;l](b,g)
= \sum_{\hat{b}}\; <d[u]\hat{b},\;b>\; [D'[v]\;l](\hat{b},g)
-i \;<u,\;(g^{1\prime}\;\delta_{g^1}+g^{2\prime}\;\delta_{g^2}>\;
[D'[v]\;l](b,g)
\nonumber\\
&=&
\sum_{\hat{b}}\; <d[u]\hat{b},\;b>\;
\{
\sum_{\tilde{b}}\; <d[v]\tilde{b},\;\hat{b}>\; l(\tilde{b},g)
-i \;<v,\;(g^{1\prime}\;\delta_{g^1}+g^{2\prime}\;\delta_{g^2}>\;
l(\hat{b},g)
\}
\nonumber\\
&& 
-i \;<u,\;(g^{1\prime}\;\delta_{g^1}+g^{2\prime}\;\delta_{g^2}>\;
\{
\sum_{\hat{b}}\; <d[v]\hat{b},\;b>\; l(\hat{b},g)
-i \;<v,\;(g^{1\prime}\;\delta_{g^1}+g^{2\prime}\;\delta_{g^2}>\;
l(b,g)
\}
\nonumber\\
&=&
\sum_{\hat{b}}\; <d[v]\hat{b},\; d[u]\;b>\; l(\hat{b},g)
\nonumber\\
&& -i \;\sum_{\hat{b}}\; \{
<d[u]\hat{b},\;b>\;
<v,\;(g^{1\prime}\;\delta_{g^1}+g^{2\prime}\;\delta_{g^2}>\;
+
<d[v]\hat{b},\;b>\;
<u,\;(g^{1\prime}\;\delta_{g^1}+g^{2\prime}\;\delta_{g^2}>\}
l(\hat{b},g)
\nonumber\\
&& -\;
<u,\;(g^{1\prime}\;\delta_{g^1}+g^{2\prime}\;\delta_{g^2}>\;
<v,\;(g^{1\prime}\;\delta_{g^1}+g^{2\prime}\;\delta_{g^2}>\;
\;l(b,g)
\ea
where we used the completeness relation on the Fock basis $b$ and the 
symmetry of the normal ordered Fock operators $d[u],\;d[v]$. Upon taking 
the commutator we see that the second and third term in (\ref{6.21}) cancel
as they are symmetric in $u,v$. The first terms in (\ref{6.21}) combine 
into 
\be \label{6.22}
\sum_{\hat{b}}\; <\hat{b},\;[d[v],d[u]]\;\;b>\; l(\hat{b},g)
\ee
where we made again use symmetry of the Fock operators. The fourth terms
can be worked out assuming that second functional derivatives commute,
using the fundamental functional derivatives 
\be \label{6.23}
\frac{\delta g^{I\prime}(x)}{\delta g^J(y)}=\delta^I_J\; 
\partial_x\delta(x,y)
\ee
and that integrations by parts does not create boundary terms which can be 
granted e.g. by assuming that $u,v$ vanish there. Then the commutator 
of the fourth terms is found to be
\be \label{6.24}
<u\;v'-v'\; u,(g^{1\prime}\;\delta_{g^1}+g^{2\prime}\;\delta_{g^2}>\;
l(b,g)
\ee  
Since in the Fock representation we have by construction
\be \label{6.25}
[d[v],d[u]]=-i\;(d[v\; u'-u\;v']+c_{DD}(v,u))
\ee
with the central term $c_{DD}(v,u)$ of the Virasoro algebra
one finds altogether
\be \label{6.26}
[D'[u],D'[v]]=i\;(D'[u\;v'-v\;u']+c_{DD}(u,v))
\ee
which is an {\it anti-}representation of the diffeomorphism algebra. 
This is because taking commutators in the dual space reverses order. Note 
also that we only obtain one central term, not three. This is because 
the geometrical sector is not normal ordered with respect to the 
Fock anihilators but rather wrt the geometric ``annihilators'' $E^I$. \\
\\
{\bf Diffeomorphism-Hamiltonian}\\
\\
The other commutators are more complicated to compute but follow the same 
pattern. We have using again completeness and symmetry
\ba \label{6.27}
&& [D'[u],C'[n]](b,g)
= \sum_{\hat{b}}\; <\hat{b},\;[h[\frac{n}{Q_g}],d[u]]\;\;b>\; l(\hat{b},g)
\nonumber\\
&& -i\sum_{\hat{b}}\; <\hat{b},\;
h[<u,g^{1\prime}\;\delta_{g^1}+g^{2\prime}\;\delta_{g^2}>,\frac{n}{Q_g}]\;b>
\; l(\hat{b},g)
\nonumber\\
&& -[<u,\;g^{1\prime}\;\delta_{g^1}+g^{2\prime}\;\delta_{g^2}>\;
,<\frac{n}{Q_g},g^{1\prime}\;\delta_{g^2}+g^{2\prime}\;\delta_{g^1}>]\;l(b,g)
\ea
As compared to the previous calculation one needs 
\be \label{6.28}
\frac{\delta}{\delta g^I(x)}\;|g^{J\prime}(y)|^{-1/2}
=-\frac{\delta_I^J}{2}\;\frac{1}{g^{J\prime}(y)\;|g^{J\prime}(y)|^{1/2}}
\; [\partial_y\delta(x,y)]
\ee
Using the Virasoro algebra 
\be \label{6.29}
[d[u],h(\tilde{n})]=-i(h(u\; \tilde{n}'-u'\; \tilde{n})+c_{DC}(u,\tilde{n})
\ee
and (\ref{6.28}) one finds that the first and second term after 
several cancellations of terms combine to 
\be \label{6.29}
i[\sum_{\hat{b}}\; <\hat{b},\;h[u\; \frac{n'}{Q_F}]\;b> \;l(\hat{b},g)
+c_{DC}(u,\frac{n}{Q_g})\; l(b,g)]
\ee
Note that (the prime denotes dual action and not derivation w.r.t. $x$0
\be \label{6.30}
l[b\otimes V(O)^{-1} w(g)]=V_g(O)^{-1}\; l(b,g) 
=[[1_F\otimes V(O)^{-1}]'\;l][b\otimes w(g)]
\ee
The third term in (\ref{6.27}) yields using (\ref{6.28}) after a longer 
but elementary calculation
\be \label{6.31}
<u\frac{n'}{Q_g},\;
g^{1\prime}\;\delta_{g^2}+g^{2\prime}\;\delta_{g^1}>\;l(b,g)
\ee
Thus altogether
\be \label{6.32}
[D'[u],C'[n]]=i\;[C'(u\;n')+c_{DC}(u,n\; [Q']^{-1})]
\ee
Of independent interest is the fact that the central term is no longer 
central but depends on the inverse dual volume density $Q'$ as follows
from (\ref{6.28}) and (\ref{6.29}). \\
\\
{\bf Hamiltonian-Hamiltonian}\\
\\
We have by already familiar methods
\ba \label{6.33}
&& [[C'(m),C'(n)]\;l](b,g)
=\sum_{\hat{b}}\; <\hat{b},\;[h(\frac{n}{Q_g}),\;h(\frac{m}{Q_g})]\;b>
\; l(\hat{b},g)
\nonumber\\
&& +i\;\sum_{\hat{b}}\;\{
<h([<\frac{n}{Q_g},g^{1\prime}\delta_{g^2}+g^{2\prime}\delta_{g^1}>,\;
\frac{m}{Q_g}])\;\hat{b},\;b>
-
<h([<\frac{m}{Q_g},g^{1\prime}\delta_{g^2}+g^{2\prime}\delta_{g^1}>,\;
\frac{n}{Q_g}])\;\hat{b},\;b>
\}\;l(\hat{b},g)
\nonumber\\
&&
- 
[<\frac{m}{Q_g},g^{1\prime}\delta_{g^2}+g^{2\prime}\delta_{g^1}>  
<\frac{n}{Q_g},g^{1\prime}\delta_{g^2}+g^{2\prime}\delta_{g^1}>]\;
l(b,g)
\ea
Using the Virasoro algebra
\be \label{6.34}
[h(\tilde{m}),h(\tilde{n})]=
i\;[d(\tilde{m}'\tilde{n}-\tilde{n}'\tilde{m})+c_{CC}(\tilde{m},\tilde{n})]
\ee
the first term in (\ref{6.33}) becomes
\be \label{6.35}  
i\sum_{\hat{b}}\; 
[<\hat{b},\;d(\frac{m\;n'-m'\;n}{Q_g^2})
-c_{CC}(\frac{m}{Q_g}),\;h(\frac{n}{Q_g})]\;b>
\; l(\hat{b},g)
\ee
The middle terms in (\ref{6.33}) are computed using (\ref{6.28}) to be 
\be \label{6.36}
-\frac{i}{2}\sum_{\hat{b}}\;
<\hat{b},\;
h[(m\; n'-m'\;n)\;
\frac{g^{1\prime}}{g^{2\prime}}+\frac{g^{2\prime}}{g^{1\prime}}]\;Q_g^{-1}]
\;b>\; l(\hat{b},g)
\ee
Finally, a tedious but straightforward calculation which uses (\ref{6.28}) 
yields the last term in (\ref{6.33}) to be 
\be \label{6.37}
<\frac{m\;n'-m'\;n}{Q_g^2}\;
[1-\frac{1}{2}
\frac{g^{1\prime}}{g^{2\prime}}+\frac{g^{2\prime}}{g^{1\prime}}],\;
[g^{1\prime}\; \delta_{g^2}+g^{2\prime} \; \delta_{g^1}]>\;\;
l(b,g)
\ee
Combining all three contributions we get
\be \label{6.38}
[C'(m),C'(n)]=
i\; [
D'(\frac{(m\;n'-m'\;n)}{[Q']^2})
+c_{cc}(\frac{m}{Q'},\frac{n}{Q'})
-\frac{1}{2}\;C'(\frac{(m\;n'-m'\;n)}{Q'}\;
[\frac{E^{1\prime}}{E^{2\prime}}+\frac{E^{2\prime}}{E^{1\prime}}])
]
\ee 
where $Q',E^{I\prime}$ denotes dual action of $Q,E^I$ and it is understood 
that these objects are ordered to the {\it outmost left} when acting on 
a distribution $l$ in order to correctly reproduce (\ref{6.36}). 

\subsection{Anomaly freeness and non-degeneracy}
\label{s6.6}
 
Comparing the classical relations (\ref{6.6}) with (\ref{6.26}), 
(\ref{6.32}) and (\ref{6.38}) we see that we get {\it an anomaly 
free anti-representation on non-degenerate distributions} including 
the correct central extensions caused by the Fock normal ordering 
of the matter sector. If one
goes through the quantum calculation in detail, it is reassuring to see 
that all of the classical 
Poisson bracket relations that were used in the derivation 
of (\ref{6.6}) are being reused in the quantum computation. \\
\\
As the computation so far is only meaningful on non-degenerate distributions 
one may wonder whether it can be extended to degenerate distributions. 
A drastic case is the LQG like representation of PFT considered in 
\cite{25,26} which is a {\it purely degenerate representation}. In the
language used here, the functions $g^I$ used in \cite{25,26} were not 
smooth but rather piecewise constant, namely characteristic functions of 
intervals of the circle multiplied by constants, that is, step functions
with support on left closed and right open intervals partitioning 
$[0,1)$.
The derivative of a step function is zero almost everywhere and has 
a delta distribution singularity at the interval ends. Accordingly the 
corresponding holonmomies actually are excited only at finitely many points 
$v\in V(g)$ where $V(g)$ denotes the ``vertices'' of the step 
function $g$. If $O$ is an open interval 
of the circle then on such a Weyl function $w[g]$ we get the eigenvalue 
\be \label{6.39}
V_g(O)=\sum_{v\in O\cap V(g)}\; V_g(v),\;
V_g(v):=|\bar{s}^1_v\;\bar{s}^2_v|^{1/2},\;\;
\bar{s}^I_v=g^I(v')-g^I(v)
\ee
where $v'\in V(g)$ is the left neighbour of $v\in V(g)$.
In order to define the operator $V(O)^{-1}$ we may e.g.
use Poisson bracket identities 
as in \cite{26} or, for the purposes of this paper sufficient, 
simply Tychonov regularisation 
$V(O)^{-1}:=\lim_{\delta\to 0}\frac{V(O)}{V(O)^2+\delta^2}$. As a result
the eigenvalues of $V(O)$ on $w[g]$ are simply given by $V_g(O)^{-1}$ if 
$V_g(O)>0$ and zero otherwise. Let us use this definition of $V(O)^{-1}$
in (\ref{6.12}) and focus on the geometry part 
of both $D(u),C(g)$. Then the operators $E^I$ 
applied to $w[g]$ restrict the integrals involved
in $D(u),C(n)$ to a sum over $v\in V(g)$. This requires to regularise
the objects 
\be \label{6.40}
\sum_v \;u(v)\;\;\sum_I A_I'(v) s^I_v\; w[g],\;\;
\sum_v\; n(v)\; \frac{A_1'(v)\; s^2_v+A_2'(v)\;s^1_v}{Q(v)} w[g]
\ee
If $O_\epsilon(v)$ is an open interval of coordinate length $\epsilon$ and 
centre $v$, a possible regularisation of (\ref{6.40}) in terms of step 
function smearing functions is 
\be \label{6.41}
\frac{1}{i\epsilon}\;\sum_v \;u(v)\;\;
[w[g_{\epsilon,v})]-1]\; w[g],\;\;
-i\sum_v\; \frac{n(v)}{Q_g(v)}\; 
[w[\tilde{g}_{\epsilon,v})]-1]\; w[g],\;\;
\ee
where 
\be \label{6.42}
g^I_{\epsilon,v}(x)=\chi_{O_\epsilon(v)}(x) s^I_v,\;
\tilde{g}^1_{\epsilon,v}(x)=\chi_{O_\epsilon(v)}(x) s^2_v,\;
\tilde{g}^2_{\epsilon,v}(x)=\chi_{O_\epsilon(v)}(x) s^1_v
\ee
and $\chi_O$ is the characteristic function of $O$. Thus (\ref{6.41}), which 
replaces (\ref{6.15}) maps the span $\cal D$ of $w[g]$ with step functions $g$ 
to itself. 

The algebraic dual of this $\cal D$ consists of distributions 
\be \label{6.43}
l=\sum_g\; l(g)\; <w[g],.>
\ee
where the sum is now restricted to step functions. Computing the 
duals of (\ref{6.41}) on (\ref{6.43}) and taking the limit $\epsilon\to 0$
yields a finite result for $D'(u)$ 
if $l$ is differentiable in the sense that
$\lim_{\epsilon\to 0}\frac{l(g+g_\epsilon(v))-l(g)}{\epsilon}$ exists.
But then $C'(n)\equiv 0$. This happens, precisely because $w[g]$ is 
degenerate almost everywhere. Indeed, we see that in the Riemann sum 
approximation (\ref{6.15}) for sufficiently small $\epsilon$ 
only those cells $\Box$ contribute which contain precisely one vertex 
$v\in V(g)$ thus resulting in (\ref{6.41}). The missing sum over all
cells when taking $\epsilon\to 0$ on the dual, 
which returned the Riemann sum  
into an integral in passing from (\ref{6.17}) to (\ref{6.18}), is missing 
here and prevents a non-trivial action of $C'(n)$. This is a 
{\it concrete demonstration of the importance of non-degeneracy}.

One may argue that in this model one should therefore use the density 
two constraints $\tilde{D},\tilde{C}$. Indeed, as is well known, by taking   
linear combinations, those constraints are equivlent to two commuting 
diffeomorphism constraint algebras. While this is true, the purpose of 
this model with standard density weight for the constraints was to 
mimick the situation in GR as close as possible where for no choice 
of density weight it is possible to avoid the structure functions and where 
density weight one is uniquely selected as the universal choice as 
demonstrated in section \ref{s3}. 
 
We could now proceed as in section \ref{s4} and use linear combinations 
of parallel translates 
of step function states $w[g]$ to a dense set of points on the circle
with coefficients such that the linear combination is normalisable. 
This would then be non-degenerate states which do not leave the realm of 
step functions as smearing functions in the Weyl operators. We leave this 
for future investigations but remark that working with smooth and non
degenerate $g$ rather than step functions is at least much more convenient.

\subsection{Solutions and physical Hilbert space}
\label{s6.7} 

Given the explicit form of the dual constraints (\ref{6.21}) on the 
domain of $l$ with $l(b,g)$ functionally differentiable wrt $g$ we ask 
for solutions $l$ and a Hilbert space structure thereon. In contrast 
to a Fock quantisation of $\tilde{D},\tilde{C}$ the above domain does not 
carry an obvious Hilbert space structure. While not necessary, see
below, we may supply such a structure 
as follows: The set of labels $b$ is discrete as Fock spaces are separable
while the set of labels $g$ is continuous. We may therefore consider 
a Hilbert space of coefficients $l(b,g)$ which are square summable wrt 
$b$ and square integrable wrt $g$ wrt a measure $\mu$ on the space 
${\cal G}$ of 
$g$ which is supported on non-degenerate $g$. That is, the space 
of $l$ 
may be given the Hilbert space structure 
${\cal H}_F\otimes L_2(d\mu,{\cal G})$ where ${\cal H}_F$ is the matter 
Fock space. 

With respect to such a 
measure the $D'(u), C'(n)$ are not symmetric operators and they should not 
be because of the non-trivial structure functions by the argument given 
in \cite{36}. Due to 
the central extension, there can be no strong solutions with respect to 
such a Hilbert space structure which is also true for the usual density 
weight two Fock space quantisation. However, one can construct weak solutions
in the usual way because the matter parts $d,h$ of the constraints are 
still symmetric and it is only those that cause the central extension.     
As the purpose of the present model was just to illustrate the {\it drastic
effect of non-degeneracy} we do not go into further details here and leave 
the issue of solutions for possible future research.

In closing, we remark that the procedure of first having defined 
a representation of the CCR among $A_I, E^I$ which is irregular for 
$A_I$ and a vacuum representation for $E^I$ just to find out that a 
dual represention in which $E^I$ acts by multiplication by $g^{I\prime}$ 
and $A_I'$ by functional derivation with respect to $g^I$ is better suited 
to formulate the dynamics apparently could have been avoided altogether as one 
could have started with that latter representation right away. The 
point is, however, if we had started with the ``g-representation'' then we 
should have had to worry about the measure $\mu$ on $\cal G$ right away and 
we should have had to make $E^I, A_I$ self-adjoint with respect to that 
measure. The indirect method frees us from doing that because a Hilbert 
space structure is only needed on the space of solutions of the constraints.

\section{Conclusion and Outlook}
\label{s7}

In the present work we have illustrated that in canonical quantum gravity 
density weight one of the constraints is not only natural as far as 
spatial diffeomorphism covariance is concerned but also dynamically selected.

We have shown that it comes at no surprise that testing the density weight one 
algebra on degenerate states leads to anomalous (dual,
i.e. on spaces of distributions) representations 
thereof. 

Furthermore, we have shown in an example that an anomaly free implementation of 
the hypersurface deformation algebra $\mathfrak{h}$ in natural representations 
of the CCR in which the spatial metric annihilates the vacuum (which is 
therefore the most degenerate state imaginable) is still possible 
if one does not neglect the fact that in the classical theory the algebra 
$\mathfrak{h}$ only {\it exists} when the metric is non-degenerate. 
Thus, the new point of view advertised here is that the 
representation of the algebra $\mathfrak{h}$ is required to hold only in the 
sector of the Hilbert space in which the quantum metric is non-degenerate. 
If the span of such states is dense and invariant then checking the algebra on 
this sector is sufficient.  
 
We thus propose that to identify a non degenerate sector in LQG 
becomes an integral part of 
looking for an anomaly free representation of $\mathfrak{h}$. Such a sector can 
hopefully be extracted, e.g., by using renormalisation methods as follows: 
One first 
defines a family of theories at finite resolutions and can identify 
non-degenerate states 
at finite resolutions since the number of degrees of freedom is 
(locally) finite.  Then, if the fixed point family of the renormalisation 
flow exists, 
the family of finite resolution non-degenerate states is consistent and defines 
finite resolution projections of continuum non-degenerate states. These 
qualify as 
the continuum states on which to test the continuum algebra $\mathfrak{h}$.  
  
For the much simpler U(1)$^3$ quantum gravity model we show in our companinon
paper \cite{50} that one can actually complete all the steps of the 
quantisation and establish the anomaly freeness of $\mathfrak{h}$ 
even without renormalisation and without using dual spaces. Quantum 
non-degeneracy is crucial for this to be possible. It is therefore
conceivable that this mechanism also works in realistic quantum gravity.

\end{document}